\documentclass[11pt]{article}

\usepackage{amsmath,amssymb,amsbsy,amsfonts,amsthm,latexsym,
            amsopn,amstext,amsxtra,euscript,amscd,color}

\PassOptionsToPackage{hyphens}{url}\usepackage{hyperref}
    
\usepackage[capbesideposition=outside,capbesidesep=quad]{floatrow}

\captionsetup{labelfont=bf, justification=justified, singlelinecheck=false, position=above}

\restylefloat{table}
            
\usepackage{multirow,caption}

\newfont{\teneufm}{eufm10}
\newfont{\seveneufm}{eufm7}
\newfont{\fiveeufm}{eufm5}

\usepackage{systeme}

\usepackage{xcolor}
\usepackage{tikz}
\usetikzlibrary{shapes}
\usetikzlibrary{decorations.markings}

\usepackage{pstricks}
\usetikzlibrary{shapes,arrows}
 \usepackage{pstricks-add}
\usepackage{epsfig}
 \usepackage[space]{grffile} 
 \usepackage{etoolbox} 
 \makeatletter 
 \patchcmd\Gread@eps{\@inputcheck#1 }{\@inputcheck"#1"\relax}{}{}
 \makeatother

\newtheorem{remark}{Remark}

\newtheorem{thm}{Theorem}
\newtheorem{lem}[thm]{Lemma}

\newtheorem{prop}[thm]{Proposition}

\newtheorem{defn}[thm]{Definition}

\newcommand{\Tr}{{\rm Tr}}
\newcommand{\Trn}{{\rm Tr}_n}
\newcommand{\Trm}{{\rm Tr}_m}

\newcommand{\cB}{\mathscr{B}}

\def\+{\oplus}

\def\cB{{\mathcal B}}

\def\cU{{\mathcal U}}

\def\cW{{\mathcal W}}

\def\F{{\mathbb F}}

\def\Fn{{\mathbb{F}_{p^n}}}

\def\00{{\bf 0}}
\def\11{{\bf 1}}
\def\+{\oplus}

\def\\{\cr}
\def\({\left(}
\def\){\right)}

\newcommand{\BBZ}{\mathbb{Z}}
\newcommand{\BBR}{\mathbb{R}}

\newcommand{\bwht}[2]{\mathcal{W}_{#1}(#2)}
\newcommand{\vwht}[3]{\mathcal{W}_{#1}(#2,#3)}

\newcommand{\cardinality}[1]{\# #1}

\providecommand{\newoperator}[3]{%
  \newcommand*{#1}{\mathop{#2}#3}}

\newoperator{\FD}{\mathrm{FD}}{\nolimits}

\begin{document}
\title{\bf Investigations on $c$-Boomerang Uniformity and Perfect Nonlinearity}
\author{Pantelimon~St\u anic\u a  \\ 
Applied Mathematics Department, \\
Naval Postgraduate School, Monterey, USA. \\
E-mail: pstanica@nps.edu}

\maketitle

\begin{abstract}
 We defined in~\cite{EFRST20} a new multiplicative $c$-differential, and the corresponding $c$-differential uniformity and we characterized the known perfect nonlinear functions with respect to this new concept, as well as the inverse in any characteristic. The work was continued in~\cite{RS20}, investigating the $c$-differential uniformity for some further APN functions. Here, we extend the concept to the boomerang uniformity, introduced at Eurocrypt '18 by Cid et al.~\cite{Cid18}, to evaluate S-boxes of block ciphers, and investigate it in the context of perfect nonlinearity and related functions.
\end{abstract}
{\bf Keywords:} 
Boolean, 
$p$-ary functions, 
$c$-differentials,  
differential uniformity,
boomerang uniformity,
perfect and almost perfect $c$-nonlinearity
\newline
{\bf MSC 2000}: 06E30, 11T06, 94A60, 94C10.


\section{Introduction}

In this paper we extend the notion of boomerang uniformity using a previously defined~\cite{EFRST20} multiplier differential (in any characteristic). We characterize some of the known perfect nonlinear functions and the inverse function through this new concept.  We also characterize this concept via the Walsh transforms as Lie et al.~\cite{Li19} did for the classical boomerang uniformity.

The objects of this study are Boolean and $p$-ary functions (where $p$ is an odd prime) and some of their differential properties.  We will introduce here only some needed notation, and the reader can consult~\cite{Bud14,CH1,CH2,CS17,MesnagerBook,Tok15} for more on Boolean and $p$-ary functions.

For a positive integer $n$ and $p$ a prime number, we let $\F_{p^n}$ be the  finite field with $p^n$ elements, and $\F_{p^n}^*=\F_{p^n}\setminus\{0\}$ be the multiplicative group (for $a\neq 0$, we often write $\frac{1}{a}$ to mean the inverse of $a$ in the multiplicative group). We let $\F_p^n$ be the $n$-dimensional vector space over $\F_p$.  We use $\cardinality{S}$ to denote the cardinality of a set $S$ and $\bar z$, for the complex conjugate.
We call a function from $\F_{p^n}$ (or $\F_p^n$) to $\F_p$  a {\em $p$-ary  function} on $n$ variables. For positive integers $n$ and $m$, any map $F:\F_{p^n}\to\F_{p^m}$ (or, $\F_p^n\to\F_p^m$)  is called a {\em vectorial $p$-ary  function}, or {\em $(n,m)$-function}. When $m=n$, $F$ can be uniquely represented as a univariate polynomial over $\F_{p^n}$ (using some identification, via a basis, of the finite field with the vector space) of the form
$
F(x)=\sum_{i=0}^{p^n-1} a_i x^i,\ a_i\in\F_{p^n},
$
whose {\em algebraic degree}   is then the largest Hamming weight of the exponents $i$ with $a_i\neq 0$. 
For $f:\F_{p^n}\to \F_p$ we define the {\it Walsh-Hadamard transform} to be the integer-valued function
$\displaystyle
\bwht{f}{u}  = \sum_{x\in \F_{p^n}}\zeta_p^{f(x)-\Trn(u x)}, \ u \in \mathbb{F}_{p^n},
$
 where $\zeta_p= e^{\frac{2\pi i}{p}}$ and $\Trn:\F_{p^n}\to \F_p$ is the absolute trace function, given by $\displaystyle \Trn(x)=\sum_{i=0}^{n-1} x^{p^i}$ (we will denote it by $\Tr$, if the dimension is clear from the context). The Walsh transform $\vwht{F}{a}{b}$ of an $(n,m)$-function $F$ at $a\in \F_{p^n}, b\in \F_{p^m}$ is the Walsh-Hadamard transform of its component function ${\rm Tr}_m(bF(x))$ at $a$, that is,
\[
  \vwht{F}{a}{b}=\sum_{x\in\F_{p^n}} \zeta_p^{\Trm(bF(x))-\Trn(ax)}.
\]
(If one wishes to work with vector spaces, then one can replace the $\Tr$ by any scalar product on that environment.)
 
Given a $p$-ary  function $f$, the derivative of $f$ with respect to~$a \in \F_{p^n}$ is the $p$-ary  function
$
 D_{a}f(x) =  f(x + a)- f(x), \mbox{ for  all }  x \in \F_{p^n},
$
which can be naturally extended to vectorial $p$-ary functions.

For an $(n,n)$-function $F$, and $a,b\in\F_{p^n}$, we let $\Delta_F(a,b)=\cardinality{\{x\in\F_{p^n} : F(x+a)-F(x)=b\}}$. We call the quantity
$\delta_F=\max\{\Delta_F(a,b)\,:\, a,b\in \F_{p^n}, a\neq 0 \}$ the {\em differential uniformity} of $F$. If $\delta_F= \delta$, then we say that $F$ is differentially $\delta$-uniform. If $\delta=1$, then $F$ is called a {\em perfect nonlinear} ({\em PN}) function, or {\em planar} function. If $\delta=2$, then $F$ is called an {\em almost perfect nonlinear} ({\em APN}) function. It is well known that PN functions do not exist if $p=2$.

The paper is organized as follows. Section~\ref{sec2} contains some background on differential and boomerang uniformity and our extension to $c$-boomerang uniformity. Section~\ref{sec3} contains some characterizations of this concept and connections with prior $c$-differential uniformity, and Section~\ref{sec4} gives its description via Walsh transforms. Section~\ref{sec5} puts together two lemmas needed in the remaining of the paper, and Section~\ref{sec6} deals with perfect nonlinearity and $c$-boomerang uniformity. Section~\ref{sec7} characterizes this uniformity for the inverse function in any characteristic and Section~\ref{sec8} concludes the paper. An appendix with some computational data follows the references.

\section{Differential and boomerang uniformity}
\label{sec2}

Wagner~\cite{Wag99} introduced the boomerang attack against block ciphers using $S$-boxes. The concept was picked up and used in~\cite{BK99,KKS00} on some real ciphers (see also~\cite{BDK02,Kim12}).
Ten years later, however, a theoretical new cryptanalysis tool based upon this attack was introduced at EUROCRYPT 2018 by Cid et al.~\cite{Cid18} namely, the Boomerang Connectivity Table (BCT) and Boomerang Uniformity. In~\cite{Cid18}, the authors analyzed some of the properties of BCT, like its relationship  with the Differential Distribution Table (DDT). They proved that perfect nonlinear (APN) S-boxes (that is, with 2-uniform DDT) always have
2-uniform BCT and for any choice of the parameters, and also showed that the  BCT uniformity is greater than or equal to the DDT uniformity (we will display below precisely the relationship between these).  

The initial concept of boomerang uniformity was defined for permutations (of course, proper $S$-boxes)  in the following way.
\begin{defn}
Let $F$ be a permutation on $\F_{2^n}$ and $(a,b)\in\F_{2^n}\times \F_{2^n}$. We define the entries of the {\em Boomerang Connectivity Table} \textup{(}{\em BCT}\textup{)} by
\[
\cB_F(a,b)=\cardinality \{x\in\F_{2^n}|F^{-1} (F(x)+b)+F^{-1}(F(x+a)+b)=a  \},
\]
where $F^{-1}$ is the compositional inverse of $F$. The {\em boomerang uniformity} of $F$ is defined as
\[
\beta_F=\max_{a,b\in\Fn*} \cB_F(a,b).
\]
We also say that $F$ is a $\beta_F$-uniform BCT function.
\end{defn}

Recently, BCT and the boomerang uniformity were further studied by Boura and Canteaut~\cite{BC18}. They  showed that the boomerang uniformity is only an affine equivalence invariant but not necessarily, extended affine nor CCZ-equivalence invariant.
 Further, they also obtained the boomerang uniformity of the inverse function $F(x)=x^{2^n-2}$ over $\F_{2^n}$, for $n$ even, namely, $\beta_F=4,6$, when $n\equiv 2\pmod 4$, respectively,  $n\equiv 0\pmod 4$. Also, for the Gold function $G(x)=x^{2^t+1}$  over $\F_{2^n}$, where $n\equiv 2\pmod 4$, $t$ even with $\gcd(t,n)=2$, then $\delta_G=\beta_G=4$.

Mesnager et al.~\cite{Mes19} continued the work and showed that the differential uniformity for quadratic functions on $\F_{q^n}$ ($q$ is a $2$-power) is always $\geq q$, and if it happens to be equal to $q$ for a permutations $F$ then its  boomerang uniformity must be $q$, as well. In fact~\cite{Li19}, in general, for quadratic permutations $F$, then we have, 
\[
\delta_F\leq \beta_F\leq \delta_F(\delta_F-1).
\]
It is also easy to show that for monomials $F(x)=x^d$, then $\displaystyle\beta_F=\max_{b\neq 0} \beta_F(1,b)$. For the Bracken-Tan-Tan function (which is an extension of Budaghyan-Carlet function~\cite{BC08}), $F(x)=\alpha x^{2^s+1}+\alpha^{2^k} x^{2^{-k}+2^{k+s}}$ (under some conditions on the parameters), then $\beta_F=4$ (see~\cite{Mes19}). More work has been done recently on BCT, and we mention here~\cite{BPT19,CV19,LiHu20,TX20}.

Surely, $\Delta_F(a,b)=0,2^n$ and $\cB_F(a,b)=2^n$  whenever $ab=0$.
It is well-known from prior work that $\delta_F=\delta_{F^{-1}}$ and $\beta_F=\beta_{F^{-1}}$. In general, for permutations, $\beta_F\geq \delta_F$ and they are equal for APN permutations. A natural question is what is the algebraic   difference between these two concepts and Boura and Canteaut answered that question in~\cite{BC18}, showing that
\[
\cB_F(a,b)=\Delta_F(a,b)+\sum_{\gamma\in\Fn^*, \gamma\neq b} \cardinality\left(\cU_{\gamma,a}^{F^{-1}}\cap \left(b+\cU_{\gamma,a}^{F^{-1}}\right)\right),
\]
where $\cU_{\gamma,a}^{F^{-1}}=\{x\in\Fn | D_{\gamma} F^{-1} (x)=a\}$.
This was reformulated by Mesnager et al.~\cite{Mes19} in the following way:
\[
\cB_F(a,b)=\sum_{\gamma\in\Fn^*} \cardinality\left(\cU_{\gamma,a}^F\cap \left(b+\cU_{\gamma,a}^F \right)\right),
\]
where $\cU_{\gamma,a}^F=\{x\in\Fn|D_{\gamma} F  (x)=a\}$,  and 
further, by Li et al.~\cite{Li19}, as 
\[
\cB_F(a,b)=\cardinality \left\{ (x,y)\in\Fn\times \Fn\,\Large\big|\, \substack{F(x)+F(y)=b \\  F(x+a)+F(y+a)=b } \right\}
\]
and easily observed in~\cite{Mes19} that this transforms into (labeling $y=x+\gamma$),
\begin{equation}
\begin{split}
\label{eq:boom-diff}
\cB_F(a,b)&=\cardinality \left\{ (x,\gamma)\in\Fn\times \Fn\,\Large\big|\, \substack{F(x+\gamma)+F(x)=b \\ F(x+\gamma+a)+F(x+a)=b }   \right\}\\
& =\sum_{\gamma\in\Fn} \cardinality  \left\{ x \in\Fn \,\big|\, D_\gamma F(x)=b \text{ and }  D_\gamma F(x+a)=b   \right\}.
\end{split}
\end{equation}
Avoiding the inverse of $F$, allows these last expressions to define the boomerang uniformity for  functions that are not necessarily permutations.

Before, we continue with our approach, let us recall the concept we (along with others) introduced in~\cite{EFRST20} (we will define it in general on $\F_{p^n}$, $p$ prime, not only for $p=2$).

Inspired by a practical differential attack developed  in~\cite{BCJW02} (though, via a different differential), we extended the definition of derivative and differential uniformity in ~\cite{EFRST20}, in the following way.
For a $p$-ary $(n,m)$-function   $F:\F_{p^n}\to \F_{p^m}$, and $c\in\F_{p^m}$, the ({\em multiplicative}) {\em $c$-derivative} of $F$ with respect to~$a \in \F_{p^n}$ is the  function
\[
 _cD_{a}F(x) =  F(x + a)- cF(x), \mbox{ for  all }  x \in \F_{p^n}.
\]
(Observe that, if   $c=1$, then we obtain the usual derivative, and, if $c=0$ or $a=0$, then we obtain a shift of the function, in the input/output.)

For an $(n,n)$-function $F$, and $a,b\in\F_{p^n}$, we let the entries of the $c$-Difference Distribution Table ($c$-DDT) be defined by ${_c\Delta}_F(a,b)=\cardinality{\{x\in\F_{p^n} : F(x+a)-cF(x)=b\}}$. We call the quantity
\[
\delta_{F,c}=\max\left\{{_c\Delta}_F(a,b)\,|\, a,b\in \F_{p^n}, \text{ and } a\neq 0 \text{ if $c=1$} \right\}\]
the {\em $c$-differential uniformity} of~$F$ (observe that we slightly change here the way we denoted the  $c$-differential uniformity in~\cite{EFRST20}). If $\delta_{F,c}=\delta$, then we say that $F$ is differentially $(c,\delta)$-uniform (or that $F$ has $c$-uniformity $\delta$, or for short, {\em $F$ is $\delta$-uniform $c$-DDT}). If $\delta=1$, then $F$ is called a {\em perfect $c$-nonlinear} ({\em PcN}) function (certainly, for $c=1$, they only exist for odd characteristic $p$; however, as proven in~\cite{EFRST20}, there exist PcN functions for $p=2$, for all  $c\neq1$). If $\delta=2$, then $F$ is called an {\em almost perfect $c$-nonlinear} ({\em APcN}) function. 
When we need to specify the constant $c$ for which the function is PcN or APcN, then we may use the notation $c$-PN, or $c$-APN.
It is easy to see that if $F$ is an $(n,n)$-function, that is, $F:\F_{p^n}\to\F_{p^n}$, then $F$ is PcN if and only if $_cD_a F$ is a permutation polynomial.

\begin{figure}[ht]
\centering
  \includegraphics[width=.6\linewidth]{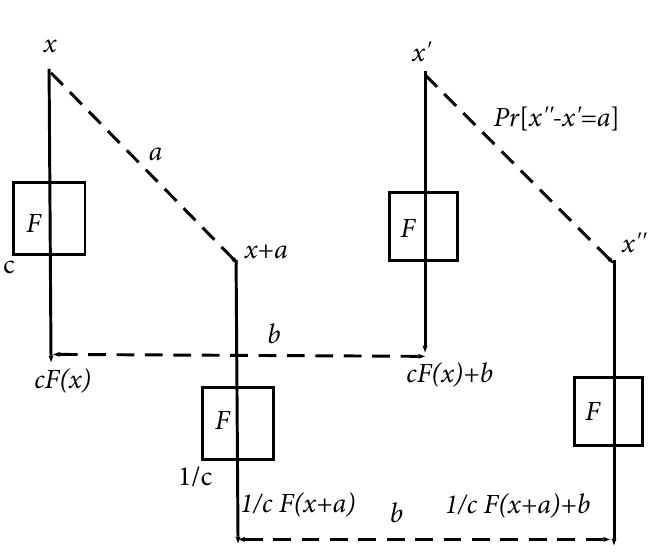}
  \caption{View on the $c$-boomerang uniformity}
  \label{fig:cboom}
\end{figure}

In light of the formulations~\eqref{eq:boom-diff} via differentials, by using our $c$-differential concept of~\cite{EFRST20}, we extend the notion of boomerang uniformity to $c$-boomerang uniformity for all characteristics, in the following way (see Figure~\ref{fig:cboom}). 
\begin{defn}
For an $(n,n)$-function $F$, $c\neq 0$, and $(a,b)\in\Fn\times \Fn$,  we define the {\em $c$-Boomerang Connectivity Table}  \textup{(}$c$-BCT\textup{)} entry at $(a,b)$ to be
{\small
\begin{equation}
\label{eq:originalBCT}
  _c\cB_F(a,b)=\cardinality \left\{ x\in\Fn\,\Big|\,  F^{-1}(c^{-1} F(x+a)+b) -F^{-1}(cF(x)+b)=a \right\}.
\end{equation}
}
Further, the {\em $c$-boomerang uniformity} of $F$ is defined by
\[
\beta_{F,c}=\max_{a,b\in\Fn*} {_c}\cB_F(a,b).
\]
If $\beta_{F,c}=\beta$, we also say that $F$ is a $\beta$-uniform $c$-BCT function.
\end{defn}

\section{Characterizations of $c$-boomerang uniformity}
\label{sec3}

As in the classical case, we find an alternative formulation that avoids inverses, allowing the definition to be extended to all $(n,n)$-function, not only permutations. Though the proof is not complicated, we label as a theorem, since it is the way to extend the definition to non-permutations.
\begin{thm}
\label{thm:new-c-boom}
For an $(n,n)$-permutation function $F$, $c\neq 0$,  the entries of the $c$-Boomerang Connectivity Table at $(a,b)\in\F_{p^n}\times\F_{p^n}$  are given by
\begin{align*}
_c\cB_F(a,b)&=\cardinality \left\{ (x,\gamma)\in\Fn\times \Fn\,\Big|\, \Large\substack{F(x+\gamma)-cF(x) =b \\  F(x+\gamma+a)- c^{-1} F(x+a)=b }  \right\}\\
& =\sum_{\gamma\in\Fn} \cardinality  \left\{ x \in\Fn \,\big|\, _cD_\gamma F(x)=b \text{ and }  _{c^{-1}}D_\gamma F(x+a)=b   \right\}.
\end{align*}
\textup{(}We shall call the system above, the $c$-boomerang system, for easy referral.\textup{)}
\end{thm}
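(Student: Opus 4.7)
The plan is to exhibit an explicit bijection between the set defining $_c\cB_F(a,b)$ via~\eqref{eq:originalBCT} and the pair-set appearing on the right-hand side, using the fact that $F$ is a permutation. Given $x \in \Fn$, let $u = F^{-1}(cF(x)+b)$ and $v = F^{-1}(c^{-1}F(x+a)+b)$, so that
\[
  F(u) = cF(x)+b, \qquad F(v) = c^{-1}F(x+a)+b.
\]
Since $F$ is a bijection, $u$ and $v$ are each uniquely determined by $x$. The condition $v-u = a$ in \eqref{eq:originalBCT} is then equivalent to $v = u+a$, so after the change of variable $\gamma := u - x$ we can replace the pair $(u,v)$ by $(x,\gamma)$.

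Under this substitution, the equation $F(u) = cF(x)+b$ becomes the first $c$-boomerang equation
\[
  F(x+\gamma) - cF(x) = b,
\]
and the equation $F(u+a) = c^{-1}F(x+a)+b$ becomes
\[
  F(x+\gamma+a) - c^{-1}F(x+a) = b,
\]
which is precisely the second $c$-boomerang equation. Conversely, any pair $(x,\gamma)$ solving the $c$-boomerang system gives $u = x+\gamma$ and $v = x+\gamma+a$ satisfying $F(u) = cF(x)+b$, $F(v) = c^{-1}F(x+a)+b$ and $v-u = a$, so $x$ lies in the set counted by~\eqref{eq:originalBCT}. The key point is that the first $c$-boomerang equation determines $\gamma$ uniquely from $x$ (again by $F$ being a permutation), so the map $x \mapsto (x,\gamma)$ is a bijection between the two sets, giving the first equality.

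For the second equality, one simply partitions the pair-count by the value of $\gamma$, and rewrites each condition in derivative form using the definition ${_cD_\gamma F(y) = F(y+\gamma) - cF(y)}$: the first equation reads $_cD_\gamma F(x) = b$, while the second, after the substitution $y = x+a$, reads $F(y+\gamma) - c^{-1}F(y) = b$, i.e., $_{c^{-1}}D_\gamma F(x+a) = b$. Summing over $\gamma \in \Fn$ yields the stated expression. No real obstacle is anticipated here beyond keeping the $c$ and $c^{-1}$ factors straight when unwinding the two applications of $F^{-1}$ at different ``sides'' of the boomerang; everything else is a direct bijection argument exploiting that $F$ is a permutation.
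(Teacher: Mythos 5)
Your proposal is correct and follows essentially the same route as the paper: the paper sets $y=F^{-1}(cF(x)+b)$ (your $u$), rewrites both defining conditions of~\eqref{eq:originalBCT} as the two equations of the $c$-boomerang system, notes the reverse implication by applying $F^{-1}$, and then substitutes $y=x+\gamma$. Your added remark that the first equation determines $\gamma$ uniquely from $x$ (so that the correspondence is a genuine bijection preserving cardinalities) is a slightly more explicit justification of the counting step, but it is the same argument.
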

\begin{proof}
Let $y=F^{-1}(cF(x)+b)$ in \eqref{eq:originalBCT}. If that is the case, then $F(y)-cF(x)=b$. Further, from~\eqref{eq:originalBCT}, we obtain,
\[
F^{-1}(c^{-1} F(x+a)+b)=y+a,
\]
and so,
\[
c^{-1}F(x+a)+b =F(y+a)\ \Longleftrightarrow \ F(y+a)-c^{-1}F(x+a)=b.
\]
The reciprocal  is immediate by backtracking the argument:
the second equation in the system implies $F(y+a)=c^{-1}F(x+a)+b$ and applying $F^{-1}$, we get
$F^{-1}(c^{-1}F(x+a)+b )=y+a=F^{-1}(F(y))+a=F^{-1}(cF(x)+b)+a$, using the first equation.
 Finally, taking $y=x+\gamma$,  the theorem follows.
\end{proof}

\begin{remark}
We could have defined the BCT entries of $F$ at $(a,b)$ to be 
\[
{_c}\cB_F(a,b)=\cardinality \left\{ (x,\gamma)\in\Fn\times \Fn\,\Big|\, \Large\substack{F(x+\gamma)-cF(x) =b \\  cF(x+\gamma+a)- F(x+a)=cb }  \right\},
\]
 thus allowing $c=0$. However, in the case of $c=0$, we would obtain
${_0}\cB_F(a,b)=\cardinality \left\{ (x,\gamma)\in\Fn\times \Fn\,\big|\, F(x+\gamma) =b \text{ and } F(x+a)=0  \right\}$;  when $F$ is a permutation, we  get $x=F^{-1}(0)-a$ and $\gamma=F^{-1}(b)-F^{-1}(0)+a$, rendering $\beta_{F,0}=1$, regardless of the function. For this reason, we decided to remove $c=0$ out of allowable multipliers.
\end{remark}

\begin{remark}
For whatever reason, one can define the $c$-DDT and $c$-BCT, and the respective uniformities, even for arbitrary $(n,m)$-functions. That is the reason why in some of our results, when it makes no difference for the proofs (except tracking carefully the parameters, of course), we take arbitrary $(n,m)$-functions.
\end{remark}

While we already knew that,  in general, the $c$-differential uniformity of $F,F^{-1}$ are not the same, we 
 show below a connection (that we did not observe in~\cite{EFRST20}) between some of the entries in the $c$-Differential Distribution Table of a permutation monomial $F$ and the one of the inverse of $F$.
\begin{prop}
Let $F(x)=x^d$ be a permutation  $($hence, $\gcd(d,p^n-1)=1$$)$ monomial function on~$\Fn$ and $c\neq 0$. Then, 
\[
{_c}\Delta_{F^{-1}} (a,b)={_{c^{-d} }}\Delta_F(bc^{-1},-a\,c^{-d} ).
\]
\end{prop}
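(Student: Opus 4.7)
The plan is to start from the definition of the $c$-DDT entry and use the bijectivity of $F$ to eliminate the inverse, reducing the count to one involving only $F$.

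First I would write the defining condition $F^{-1}(x+a)-cF^{-1}(x)=b$ for $_c\Delta_{F^{-1}}(a,b)$ and perform the change of variable $y=F^{-1}(x)$. Because $F(x)=x^d$ is a permutation, this is a bijection of $\F_{p^n}$, so the cardinality of the solution set is preserved. Under this substitution $x=y^d$, and setting $u=F^{-1}(y^d+a)$, the condition becomes the pair $u^d=y^d+a$ and $u-cy=b$. Eliminating $u$ via $u=cy+b$ collapses the pair into the single polynomial equation
\[
(cy+b)^d - y^d = a
\]
in the single unknown $y$.

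Next I would exploit the monomial structure to factor $(cy+b)^d = c^d(y+bc^{-1})^d = c^d\,F(y+bc^{-1})$, and divide through by $c^d\neq 0$ to rewrite the condition as
\[
F(y+bc^{-1}) - c^{-d}F(y) = a\,c^{-d}.
\]
The number of $y\in\F_{p^n}$ satisfying this is, by definition, $_{c^{-d}}\Delta_F(bc^{-1},\,a\,c^{-d})$; since the change of variable was a bijection, this equals $_c\Delta_{F^{-1}}(a,b)$, matching the statement of the proposition up to the sign of $a$ in the second argument, which I would verify against the paper's sign convention for the $c$-derivative.

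The heart of the argument is the substitution $y=F^{-1}(x)$; once the inverse is removed, everything is purely algebraic. The monomial hypothesis enters only through the factorization $(cy+b)^d=c^d(y+bc^{-1})^d$, and this single identity is exactly what flips the multiplier from $c$ to $c^{-d}$ (for a non-monomial no such clean factorization is available, so this approach is special to power functions). I do not anticipate a real obstacle; the main point to be careful about is the bookkeeping of constants and signs on both sides.
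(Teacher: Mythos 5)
Your proof is correct and follows essentially the same route as the paper's: eliminate $F^{-1}$ by applying $F$ (your substitution $y=F^{-1}(x)$), then use $(cy+b)^d=c^d\left(y+bc^{-1}\right)^d$ to convert the multiplier $c$ into $c^{-d}$. The sign you were unsure about is in fact a slip in the paper, not in your computation: in the paper's chain of equivalences, the passage from $F(cF^{-1}(x)+b)-x=a$ to $F(F^{-1}(x))-F(cF^{-1}(x)+b)=a$ should yield $-a$ on the right-hand side, so the identity should read ${_c}\Delta_{F^{-1}}(a,b)={_{c^{-d}}}\Delta_F(bc^{-1},a\,c^{-d})$, exactly as you derived (a quick check over $\F_5$ with $F(x)=x^3$, $c=2$, $a=b=1$ gives ${_2}\Delta_{F^{-1}}(1,1)=3={_2}\Delta_F(3,2)$, whereas the paper's right-hand side ${_2}\Delta_F(3,3)$ equals $0$).
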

\begin{proof}
Observe that, in general (assuming $c\neq 0$),
\allowdisplaybreaks
\begin{align*}
& F^{-1}(x+a)-c F^{-1}(x)=b  \Longleftrightarrow\\
&F( F^{-1}(x+a))= F(c F^{-1}(x)+b) \Longleftrightarrow\\
& F(c F^{-1}(x)+b) -x=a \Longleftrightarrow\\
& F(F^{-1}(x))- F(c F^{-1}(x)+b) =a.
\end{align*}
If $F$ is the monomial $x^d$, then  the last equation above becomes 
\begin{align*}
& F(F^{-1}(x)) - c^d F( F^{-1}(x)+bc^{-1}) =a  \Longleftrightarrow\\
&  F( F^{-1}(x)+bc^{-1})- c^{-d} F(F^{-1}(x))=-a c^{-d},
\end{align*}
and the proposition will follow easily.
\end{proof}

We next show a connection between $c$-differential and $c$-boomerang uniformities.
 \begin{thm}
Let $F$ be a  permutation function  on $\Fn$, $(a,b)\in\Fn\times \Fn$ and $c\neq 0,1$. Then
${_c}\cB_F(0,b)={{_c\Delta}}_F(0,b)$. If $p=2$, then
\begin{align*}
 {_c}\cB_F(a,b)&={_c}\Delta_F(a,b)+\sum_{a\neq \gamma\in\Fn} \cardinality  \left\{ x \in\Fn \,\big|\, \substack{_cD_\gamma F(x)=b \\ {_{c^{-1}}}D_\gamma F(x+a)=b }  \right\},
 \end{align*}
 if either $c=-1$ or $b=0$.
Consequently, $\beta_{F,-1}\geq \delta_{F,-1}$, for all permutations~$F$ \textup{(}it is known~\textup{\cite{BC18}} that $\beta_{F,1}\geq \delta_{F,1}$\textup{)}.
\end{thm}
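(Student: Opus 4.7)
The plan is to invoke the sum formulation from Theorem~\ref{thm:new-c-boom}, namely
\[
{_c}\cB_F(a,b)=\sum_{\gamma\in\Fn}|S_\gamma|,\qquad S_\gamma=\bigl\{x:{_c}D_\gamma F(x)=b\text{ and }{_{c^{-1}}}D_\gamma F(x+a)=b\bigr\},
\]
and then isolate the index that should reproduce the differential count. The first claim and the $p=2$ decomposition are both reduced to identifying a single summand; the consequence follows by monotonicity.

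For the identity ${_c}\cB_F(0,b)={_c}\Delta_F(0,b)$, I would set $a=0$: the two conditions defining $S_\gamma$ collapse to $F(x+\gamma)-cF(x)=b$ and $F(x+\gamma)-c^{-1}F(x)=b$. Subtracting gives $(c^{-1}-c)F(x)=0$, so in the generic regime $c\ne c^{-1}$ one gets $x=F^{-1}(0)$ uniquely and then $\gamma$ is pinned down by $F(F^{-1}(0)+\gamma)=b$. Summing over $\gamma$ yields exactly $1$, which matches ${_c}\Delta_F(0,b)=|\{x:(1-c)F(x)=b\}|=1$, since $F$ is a permutation and $c\ne 1$.

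For the $p=2$ decomposition, I would pull the $\gamma=a$ summand out of $\sum_\gamma|S_\gamma|$. Its first condition is exactly ${_c}D_a F(x)=b$, which counted alone yields ${_c}\Delta_F(a,b)$; the second condition reads $F(x+2a)-c^{-1}F(x+a)=b$. In characteristic two, $2a=0$, so $F(x+2a)=F(x)$, and substituting $F(x+a)=cF(x)+b$ from the first condition reduces the second to the scalar constraint $b(1+c^{-1})=0$. Under either hypothesis $b=0$ or $c=-1$ this holds identically, making the second condition redundant. Hence the $\gamma=a$ summand equals ${_c}\Delta_F(a,b)$ exactly, and what remains in $\sum_\gamma|S_\gamma|$ is the displayed sum over $\gamma\ne a$. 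The $\beta_{F,-1}\ge\delta_{F,-1}$ consequence then follows since the decomposition exhibits ${_c}\cB_F(a,b)$ as ${_c}\Delta_F(a,b)$ plus a nonnegative quantity, giving ${_c}\cB_F(a,b)\ge{_c}\Delta_F(a,b)$, which survives taking maxima over $a,b\in\Fn^*$.

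The delicate point, and the step where I expect the real effort to lie, is the algebraic collapse of the $\gamma=a$ summand: one has to track that $F(x+2a)=F(x)$ requires the $2a=0$ coming from characteristic two, while the leftover scalar $b(1+c^{-1})$ must vanish independently, and the two routes to its vanishing, namely $b=0$ and $c^{-1}=-1$, are exactly the two alternatives stated in the hypothesis. Keeping these two cancellations straight, and separately noting that the first claim silently uses $c\ne c^{-1}$, is the only subtlety; everything else is bookkeeping on top of Theorem~\ref{thm:new-c-boom}.
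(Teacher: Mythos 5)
Your route is exactly the paper's own: invoke the sum-over-$\gamma$ formulation of Theorem~\ref{thm:new-c-boom}, handle $a=0$ by subtracting the two defining equations to force $F(x)=0$, and obtain the $p=2$ decomposition by isolating the $\gamma=a$ summand, where $x+\gamma+a=x$ collapses the second condition, via the substitution $F(x+a)=cF(x)+b$, to the scalar constraint $b(1+c^{-1})=0$, which holds under either hypothesis. For the parts you actually argue, this matches the paper's proof step for step, and the bookkeeping is correct.

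There are, however, two places where what you write does not yet cover the statement. First, the identity ${_c}\cB_F(0,b)={_c}\Delta_F(0,b)$ is asserted for all $c\neq 0,1$, including $c=-1$ in odd characteristic, and you explicitly leave that case open (``silently uses $c\neq c^{-1}$'') rather than closing it. The paper treats it separately: when $c=c^{-1}=-1$ the two equations of the system coincide, so ${_{-1}}\cB_F(0,b)=\cardinality\{(x,\gamma)\,:\,F(x+\gamma)+F(x)=b\}$. Your instinct that this is a genuinely different regime is well founded — that pair count equals $p^n$ (each $x$ determines a unique $\gamma$ since $F$ is a permutation), so reconciling it with the one-variable count ${_{-1}}\Delta_F(0,b)=\cardinality\{x:2F(x)=b\}$ requires care, and one cannot simply wave at the generic argument. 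Second, the consequence $\beta_{F,-1}\geq\delta_{F,-1}$ is only of interest for odd $p$ (in characteristic $2$ one has $-1=1$, which is excluded), yet the decomposition you invoke to prove it was established only under the hypothesis $p=2$. For odd $p$ and $\gamma=a$ one has $F(x+2a)\neq F(x)$, and the $\gamma=a$ summand for $c=-1$ is $\cardinality\{x:{_{-1}}D_aF(x)=b \text{ and } {_{-1}}D_aF(x+a)=b\}$, which is contained in, but not equal to, the set counted by ${_{-1}}\Delta_F(a,b)$; so the monotonicity argument you sketch needs an additional step (or a different choice of summand) before the ``consequently'' is justified for odd $p$. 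This last point is a weakness the paper's own proof shares, but your write-up inherits it rather than repairing it.
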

\begin{proof}
If $a=0$ and $c\neq 0,\pm1$, then 
\begin{align*}
_c\cB_F(0,b)&=\cardinality \left\{ (x,\gamma)\in\Fn\times \Fn\,\Big|\, \Large\substack{F(x+\gamma)-cF(x) =b \\ F(x+\gamma)-c^{-1}F(x)=b } \right\},
\end{align*}
and so, $F(x)=0$ and $F(x+\gamma)=b$, which has only one solution $(x,\gamma)=(F^{-1}(0),F^{-1}(b)-F^{-1}(0))$.  
Further, ${{_c\Delta}}_F(0,b)=\cardinality\{x\in\Fn\,|\, {_c}D_\gamma F(x)=F(x)-cF(x)=b\}$, which also has one solution, and so, ${_c}\cB_F(0,b)={{_c\Delta}}_F(0,b)$. When $a=0$ and $c=-1$, the values of the two uniformities are the same, since 
\begin{align*}
{_{-1}}\cB_F(0,b)&=\cardinality \left\{ (x,\gamma)\in\Fn\times \Fn\,\Big|\, \Large\substack{F(x+\gamma)+F(x) =b \\ F(x+\gamma)+F(x)=b } \right\}\\
&= \cardinality \left\{ (x,\gamma)\in\Fn\times \Fn\,\Big|\,  F(x+\gamma)+F(x) =b \right\}=
{_{-1}}\Delta_F(0,b),
\end{align*}
and the first claim is shown.

Assume now that $a\neq 0$. If $\gamma=0$, then (we use the logical conjunction ``$\wedge$'' to denote ``and'').
\begin{align*}
&  
\left\{ x \in\Fn \,\Big|\,  F(x)-cF(x)=b \wedge F(x+a)-c^{-1}F(x+a)=b \right\}\\
=& \left\{ x \in\Fn \,\big|\,  F(x)=\frac{b}{1-c}  \wedge F(x+a) =\frac{b}{1-c^{-1}}  \right\}.
 \end{align*}
Thus, for $c$ fixed, we get $x=F^{-1} \left (\frac{b}{1-c}  \right)$ and so, $a=F^{-1} \left (\frac{b}{1-c^{-1}}  \right)-F^{-1} \left (\frac{b}{1-c} \right)$, since  $F$ is a permutation. Thus, when $\gamma= 0$,
\[
\cardinality  \left\{ x \in\Fn \,\Big|\, \Large\substack{F(x)-cF(x)=b \\ F(x+a)-c^{-1}F(x+a)=b }\right\}=\begin{cases}
1& \text{ if } a=F^{-1} \left (\frac{b}{1-c^{-1}}  \right)-F^{-1} \left (\frac{b}{1-c} \right)\\
0 & \text{ otherwise}.
\end{cases}
\]

 If $\gamma=a$ and $p=2$, then 
 \begin{align*}
&  \left\{ x \in\Fn \,\Large\big|\, \substack{F(x+a)-cF(x)=b \\ F(x)-c^{-1} F(x+a)=b }\right\}=  \left\{ x \in\Fn \,\Large\big|\, \substack{F(x+a)-cF(x)=b \\  F(x+a)-cF(x)=-bc }\right\},
 \end{align*}
which is $\emptyset$, unless $b=0$, or $c=-1$, in which case we have
 \begin{align*}
 &\cardinality  \left\{ x \in\Fn \,\large\big|\,  F(x+a)-cF(x)=b  \right\}  ={_c}\Delta_F (a,b),
 \end{align*}
 Thus, 
 \begin{align*}
 {_c}\cB_F(a,b)&={_c}\Delta_F(a,b)+\sum_{a\neq \gamma\in\Fn} \cardinality  \left\{ x \in\Fn \,\big|\, \substack{_cD_\gamma F(x)=b \\  _{c^{-1}}D_\gamma F(x+a)=b }  \right\},
 \end{align*}
 if either $c=-1$ or $b=0$.
\end{proof}

\section{Characterizing $c$-boomerang uniformity  via the Walsh transform}
\label{sec4}

Using a method of Carlet~\cite{Car18} and Chabaud and Vaudenay~\cite{CV95} connecting the differential uniformity of an $(n,m)$-function to its Walsh coefficients, we characterized the $c$-differential uniformity in~\cite{EFRST20} and Li et al.~\cite{Li19} characterized the boomerang uniformity. We can use a similar method to do the same for the $c$-boomerang uniformity, in any characteristic. 

\begin{thm}
\label{thm:charact2}
Let $c\in\F_{p^m}^*$ and $n,m,\delta$ be fixed  positive integers. Let $F$  be an $(n,m)$-function,   $F:\F_{p^n}\to\F_{p^m}$ and $\phi_\beta(x)=\sum_{j\geq 0} A_j x^j$ be a polynomial over $\BBR$ such that $\phi_\beta(x)=0$ for $x\in\BBZ, 1\leq x\leq \beta$, and $\phi_\beta(x)>0$, for $x\in\BBZ, x> \beta$.  We then have
  \begin{align*}
p^{m+n} A_0+&\sum_{j\geq 1}  p^{-(2j-1)(m+n)} A_j \sum_{\substack{w_1,\ldots,w_j\in \F_{p^n}\\
z_1,\ldots,z_j\in \F_{p^n}\\
u_1,\ldots,u_j\in \F_{p^m}\\
v_1,\ldots,v_j\in \F_{p^m}\\
\sum_{i=1}^j (w_i+z_i)=0\\
\sum_{i=1}^j (u_i+v_i)=0
}}\\
&
 \qquad \prod_{i=1}^j \left( \cW_F(u_i,z_i)\overline{\cW_F}(cu_i,-w_i) \cW_F(v_i,-z_i) \overline{\cW_F}(c^{-1} v_i, w_i) \right)\geq 0,
\end{align*}
with equality if and only if $F$ is $\beta$-uniform $c$-BCT.
\end{thm}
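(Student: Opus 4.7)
The plan is to compute the double sum $S = \sum_{(a,b) \in \F_{p^n} \times \F_{p^m}} \phi_\beta({_c\cB_F}(a,b))$ and recognize it as the left-hand side of the claimed inequality. Because ${_c\cB_F}(a,b)$ is always a nonnegative integer and $\phi_\beta$ is nonnegative on $\BBZ_{\geq 0}$ with zero set on the positive integers equal to $\{1,\ldots,\beta\}$, we get $S \geq 0$, with equality precisely when every ${_c\cB_F}(a,b)$ lies in that zero set, i.e.\ when $F$ is $\beta$-uniform $c$-BCT (the corner contributions with a zero coordinate being absorbed into the constant $p^{n+m} A_0$). Expanding $\phi_\beta(x)=\sum_{j\ge 0} A_j x^j$ gives $S = p^{n+m} A_0 + \sum_{j \ge 1} A_j T_j$ with $T_j := \sum_{(a,b)} ({_c\cB_F}(a,b))^j$, so the problem reduces to evaluating $T_j$ in terms of Walsh coefficients of $F$.

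For this evaluation, I would start from the derivative-based formulation of ${_c\cB_F}(a,b)$ given by Theorem~\ref{thm:new-c-boom} and express each of the two indicators via additive characters on $\F_{p^m}$ using
\[
\mathbf{1}[y=0] = p^{-m} \sum_{u\in\F_{p^m}} \zeta_p^{\Tr_m(uy)}.
\]
Raising to the $j$-th power introduces $j$ independent copies of the dummy variables $x_i, \gamma_i \in \F_{p^n}$ and $u_i, v_i \in \F_{p^m}$. The sum over $b$ then collapses via orthogonality on $\F_{p^m}$, producing a factor $p^m$ together with the first constraint $\sum_i (u_i + v_i) = 0$ that appears in the statement.

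To turn the exponentials in $F$ into Walsh transforms, I would next apply the inversion identity
\[
\zeta_p^{\Tr_m(u F(w))} = p^{-n} \sum_{t \in \F_{p^n}} \cW_F(t, u)\, \zeta_p^{\Tr_n(t w)}
\]
to each of the four occurrences of $F$ per index $i$, namely $F(x_i + \gamma_i)$, $F(x_i)$, $F(x_i + \gamma_i + a)$, $F(x_i + a)$. The two occurrences carrying a minus sign (in $-c u_i F(x_i)$ and $-c^{-1} v_i F(x_i+a)$) yield conjugated Walsh factors, producing the product $\cW_F(u_i, z_i) \overline{\cW_F}(cu_i, -w_i) \cW_F(v_i, -z_i) \overline{\cW_F}(c^{-1}v_i, w_i)$ for the surviving dummies $w_i, z_i \in \F_{p^n}$. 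The remaining exponentials are purely linear characters in $x_i, \gamma_i, a$; summing over these via orthogonality on $\F_{p^n}$ pairs off arguments, and the sum over $a$ (which is shared across all copies) produces the second constraint $\sum_i (w_i + z_i) = 0$. Collecting the normalizations---$p^{-2jm}$ from the two orthogonality steps, $p^{-4jn}$ from the $4j$ Walsh inversions, and $p^m \cdot p^n$ from the final sums over $b$ and $a$, together with the $p^n$-factors absorbed by the internal orthogonality relations---gives the overall prefactor $p^{-(2j-1)(m+n)}$ recorded in the statement.

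The main obstacle I expect is the bookkeeping in the Walsh-inversion step: one must track which linear characters pair off to yield the delta-constraint $\sum_i(w_i+z_i)=0$, which surviving indices become the final Walsh arguments, and keep signs and the scalar multiplications by $c$ and $c^{-1}$ consistent so that the four Walsh factors come out exactly as $\cW_F(u_i,z_i)$, $\overline{\cW_F}(cu_i,-w_i)$, $\cW_F(v_i,-z_i)$, $\overline{\cW_F}(c^{-1}v_i,w_i)$ rather than with some cyclic permutation or sign flip. A secondary subtlety is reconciling the indexing in the statement with the earlier definition of $\cW_F$, and verifying that the contributions of pairs $(a,b)$ with a zero coordinate do not disturb the equality characterization beyond the constant term $p^{n+m}A_0$. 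Once $T_j$ has been correctly identified, the inequality and the equivalence with $\beta$-uniformity follow immediately from the sign properties of $\phi_\beta$ on the nonnegative integers.
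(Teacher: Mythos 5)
Your proposal is correct and follows essentially the same route as the paper: sum $\phi_\beta$ over all $c$-BCT entries, expand the counting function $n_F(a,b,c)$ via additive characters on $\F_{p^m}$, raise to the $j$-th power with independent dummy variables, and convert to Walsh coefficients, with the final sums over $a$ and $b$ producing the two linear constraints and the prefactor $p^{-(2j-1)(m+n)}$. The only (immaterial) difference is bookkeeping in the Walsh step: you Fourier-invert each factor $\zeta_p^{\Trm(uF(\cdot))}$ and then sum out $x_i,y_i$ to pair the dual variables, whereas the paper introduces shifted copies $x_i'=x_i+a$, $y_i'=y_i+a$ enforced by character sums over $w_i,z_i$ and reads the four Walsh transforms off the sums over the position variables --- these are Fourier-dual organizations of the same computation and yield identical constraints and normalization.
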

\begin{proof}
We follow mostly~\cite{Car18,EFRST20,Li19}, pointing out the differences, where applicable.
Let $a\in\F_{p^n},d\in\F_{p^m}$ be arbitrary elements.  We let $n_F(a,b,c)=\cardinality \{ (x,\gamma)\in\F_{p^n}\times\F_{p^n}\,|\, _cD_\gamma F(x)= {_c}D_\gamma F(b), {_{c^{-1}}}D_\gamma F(x+a)= {_c}D_\gamma F(b)\} =\cardinality \{ (x,y)\in\F_{p^n}\times\F_{p^n}\,|\, F(y)-cF(x)= b, F(y+a)-c^{-1}F(x+a)= b\}  \geq 0$. From the definition of $\phi_\beta$, for $c$ fixed, and for all $a,b\in\F_{p^n}$, then,
\[
\displaystyle \sum_{j\geq 0} A_j \left(n_F(a,b,c) \right)^j\geq 0,
\]
with equality if and only if $_c\cB_F(a,b)=\beta$. Consequently, running with all $a,b\in\F_{p^n}$, any $(n,m)$-function $F$ satisfies
\[
\sum_{j\geq 0} A_j \sum_{a,b\in\F_{p^n}} \left(n_F(a,b,c) \right)^j\geq 0,
\]
with equality if and only of $\beta_{F,c}=\beta$.

Using the well-known, 
\begin{align}
\label{eq:trace}
  \sum_{v\in\F_{p^m}} \zeta_p^{\Trm(v\alpha)}=
\begin{cases}  0 & \text{ if } \alpha\neq 0\\
p^m & \text{ if } \alpha = 0,
\end{cases}
\end{align} 
we see that
\[
\displaystyle n_F(a,b,c)=p^{-2m} \sum_{x,y\in F_{p^n}, u,v\in F_{p^m}} \zeta_p^{\Trm(u(F(y)-cF(x)-b))+\Trm(v ( F(y+a)-c^{-1}F(x+a)- b))},
\]
which, when run for all $j$ tuples of parameters ($j\geq 1$ is fixed) renders
 \begin{align*}
&\sum_{a\in\F_{p^n},b\in\F_{p^m}} \left(n_F(a,b,c) \right)^j \\
&= p^{-2jm}\sum_{\substack{a\in\F_{p^n}\\ b\in\F_{p^m}}}\sum_{\substack{x_1,\ldots,x_j\in\F_{p^n}\\ y_1,\ldots,y_j\in\F_{p^n}\\ u_1,\ldots,u_j\in \F_{p^m}\\ v_1,\ldots,v_j\in \F_{p^m}}} \zeta_p^{\sum_{i=1}^j \left[\Trm(u_i(F(y_i)-cF(x_i)-b))+\Trm(v_i ( F(y_i+a)-c^{-1}F(x_i+a)- b))\right]}.
\end{align*}
By~\eqref{eq:trace},  $\displaystyle \sum_{w_i\in\F_{p^n}} \zeta_p^{\Trn (w_i(x_i'-x_i-a))}=p^n$, if $x_i'=x_i+a$ and $0$, otherwise; similarly, $\displaystyle \sum_{z_i\in\F_{p^n}} \zeta_p^{\Trn (z_i(y_i'-y_i-a))}=p^n$, if $y_i'=y_i+a$ and $0$, otherwise.
Therefore,
\allowdisplaybreaks
\begin{align*}
&\sum_{ a\in\F_{p^n},b\in\F_{p^m}} \left(n_F(a,b,c) \right)^j =p^{-2mj} \ p^{-2nj} \sum_{ a\in\F_{p^n},b\in\F_{p^m}} \sum_{\substack{x_1,\ldots,x_j, x_1',\ldots,x_j'\in\F_{p^n}\\
y_1,\ldots,y_j,y_1',\ldots,y_j'\in\F_{p^n}\\ 
w_1,\ldots,w_j,z_1,\ldots,z_j\in \F_{p^n}\\
u_1,\ldots,u_j,v_1,\ldots,v_j\in \F_{p^m}}} \\
&\zeta_p^{\sum_{i=1}^j \left[\Trm(u_i (F(y_i)-cF(x_i)-b))+\Trm(v_i ( F(y_i')-c^{-1}F(x_i')- b))+\Trn(w_i(x_i'-x_i-a))+\Trn(z_i(y_i'-y_i-a))\right]}\\
&=p^{-2(m+n)j}  \sum_{\substack{w_1,\ldots,w_j,z_1,\ldots,z_j\in \F_{p^n}\\
u_1,\ldots,u_j,v_1,\ldots,v_j\in \F_{p^m}}}
 \prod_{i=1}^j \left( \cW_F(u_i,z_i)\overline{\cW_F}(cu_i,-w_i) \cW_F(v_i,-z_i) \overline{\cW_F}(c^{-1} v_i, w_i) \right)\\
 &\qquad\qquad\qquad\qquad\qquad\qquad\qquad\qquad \sum_{a\in\F_{p^n},b\in\F_{p^m}} \zeta_p^{\Trm \left(-a\sum_{i=1}^j (w_i+z_i) -b\sum_{i=1}^j (u_i+v_i)\right)}\\
&=p^{-(2j-1)(m+n)} \sum_{\substack{w_1,\ldots,w_j\in \F_{p^n}\\
z_1,\ldots,z_j\in \F_{p^n}\\
u_1,\ldots,u_j\in \F_{p^m}\\
v_1,\ldots,v_j\in \F_{p^m}\\
\sum_{i=1}^j (w_i+z_i)=0\\
\sum_{i=1}^j (u_i+v_i)=0
}}
 \prod_{i=1}^j \left( \cW_F(u_i,z_i)\overline{\cW_F}(cu_i,-w_i) \cW_F(v_i,-z_i) \overline{\cW_F}(c^{-1} v_i, w_i) \right).
\end{align*}
Surely, if $j=0$,  $\displaystyle \sum_{\substack{a\in\F_{p^n}\\
b\in\F_{p^m}} }\left(n_F(a,b,c) \right)^j =p^{m+n}$ and
the theorem follows.
\end{proof}

As a particular case, we want to characterize the $1$-uniform $c$-BCT functions.
We can take the polynomial $\phi_1(x)=x-1$, which certainly satisfies the conditions of Theorem~\ref{thm:charact2}. Thus $A_0=-1,A_1=1$ and the relation of Theorem~\ref{thm:charact2}  simplifies to
\begin{align*}
&-p^{n+m}+p^{-(n+m)} \sum_{\substack{w,z\in\F_{p^n}\\
u,v\in\F_{p^m}\\
w=-z,u=-v
} }\cW_F(u,z)\overline{\cW_F}(cu,-w) \cW_F(v,-z) \overline{\cW_F}(c^{-1} v, w)\\
&=\sum_{\substack{z\in\F_{p^n}\\
v\in\F_{p^m}
} }\cW_F(-v,z)\overline{\cW_F}(-cv,z) \cW_F(v,-z) \overline{\cW_F}(c^{-1} v, -z) \geq 0.
\end{align*}
Thus, we obtain the next result.
\begin{prop}
Let  $m,n$ be fixed  positive integers and $c\in\F_{p^m}$, $c\neq 0,1$. Let $F$  be an $(n,m)$-function. Then
\[
\sum_{\substack{z\in\F_{p^n}\\v\in\F_{p^m}
} }\cW_F(-v,z)\overline{\cW_F}(-cv,z) \cW_F(v,-z) \overline{\cW_F}(c^{-1} v, -z) \geq p^{2(n+m)},
 \]
 with equality if and only if $F$ is a $1$-uniform $c$-BCT function.
\end{prop}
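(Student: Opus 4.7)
The plan is to specialize the Walsh-transform master inequality of Theorem~\ref{thm:charact2} to the case $\beta = 1$, using the simplest admissible test polynomial, and collect terms; essentially all the work is done by that theorem.

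First I would choose $\phi_1(x) = x - 1$ and verify the hypotheses: $\phi_1(1) = 0$, and $\phi_1(x) = x - 1 > 0$ for every integer $x \geq 2$, so $\phi_1$ meets the requirements for $\beta = 1$. Its coefficient sequence is $A_0 = -1$, $A_1 = 1$, and $A_j = 0$ for $j \geq 2$. Consequently only the $j = 0$ and $j = 1$ summands of the inequality in Theorem~\ref{thm:charact2} contribute.

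The $j = 0$ contribution is $p^{m+n} A_0 = -p^{m+n}$. For $j = 1$, the two linear constraints $w_1 + z_1 = 0$ and $u_1 + v_1 = 0$ collapse the multi-index sum to a double sum over $(z,v) \in \F_{p^n}\times\F_{p^m}$ via the substitution $w_1 = -z_1 =: -z$ and $u_1 = -v_1 =: -v$. Plugging these into the Walsh product
\[
\cW_F(u_1,z_1)\overline{\cW_F}(cu_1,-w_1)\cW_F(v_1,-z_1)\overline{\cW_F}(c^{-1}v_1,w_1)
\]
yields, factor by factor, $\cW_F(-v,z)\,\overline{\cW_F}(-cv,z)\,\cW_F(v,-z)\,\overline{\cW_F}(c^{-1}v,-z)$, matching the expression in the statement. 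Thus the inequality of Theorem~\ref{thm:charact2} becomes
\[
-p^{m+n} + p^{-(m+n)}\sum_{\substack{z\in\F_{p^n}\\ v\in\F_{p^m}}} \cW_F(-v,z)\,\overline{\cW_F}(-cv,z)\,\cW_F(v,-z)\,\overline{\cW_F}(c^{-1}v,-z) \;\geq\; 0.
\]

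Multiplying through by $p^{m+n} > 0$ delivers the claimed bound. For the equality characterization, Theorem~\ref{thm:charact2} gives equality precisely when $F$ is $\beta$-uniform $c$-BCT, and since $\phi_1(x) = 0$ only at $x = 1$, this forces $\beta = 1$, i.e.\ $F$ is $1$-uniform $c$-BCT. There is no substantive obstacle; the only bookkeeping care is matching the four sign conventions in the Walsh factors after the substitutions $w = -z$, $u = -v$.
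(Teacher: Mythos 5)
Your proposal is correct and follows exactly the route the paper takes: specialize Theorem~\ref{thm:charact2} to $\phi_1(x)=x-1$ (so $A_0=-1$, $A_1=1$), collapse the $j=1$ sum via $w_1=-z_1$, $u_1=-v_1$, and rearrange the resulting inequality $-p^{m+n}+p^{-(m+n)}\,S\geq 0$ into $S\geq p^{2(n+m)}$. The sign bookkeeping in the four Walsh factors and the equality characterization both match the paper's argument.
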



\section{Some needed lemmas}
\label{sec5}

In the next few sections, we will investigate some known perfect nonlinear, as well as the  inverse function (in all characteristics) with respect to the $c$-boomerang uniformity.
We will need the following two lemmas. The proof of Lemma~\ref{lem10}$(i)$ can be found in~\cite{BRS67} and   Lemma~\ref{lem10}$(ii)$ is easy and argued in~\cite{EFRST20}. The proof of Lemma~\ref{lem:gcd} is contained in~\cite{EFRST20}.
\begin{lem}
\label{lem10}
Let $n$ be a positive integer. We have:
\begin{enumerate}
 \item[$(i)$] The equation
$x^2 + ax + b = 0$, with $a,b\in\F_{2^n}$, $a\neq 0$,
has two solutions in $\F_{2^n}$ if  $\Tr\left(
\frac{b}{a^2}\right)=0$, and zero solutions otherwise.
\item[$(ii)$]  The equation
$x^2 + ax + b = 0$, with $a,b\in\F_{p^n}$, $p$ odd,
has (two, respectively, one) solutions in $\F_{p^n}$ if and only if the discriminant $a^2-4b$ is a (nonzero, respectively, zero) square in $\F_{p^n}$.
\end{enumerate}
\end{lem}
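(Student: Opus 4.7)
My plan is to treat the two characteristics separately, since completing the square behaves very differently in characteristic $2$.

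For part $(ii)$, with $p$ odd, I would simply complete the square: multiplying $x^2+ax+b=0$ by $4$ gives $(2x+a)^2=a^2-4b$, where $2$ is a unit. The number of solutions $x\in\F_{p^n}$ is therefore in bijection with the number of square roots of $a^2-4b$ in $\F_{p^n}$. If $a^2-4b$ is a nonzero square, there are exactly two square roots (they differ by sign, and $p$ odd ensures they are distinct), yielding two solutions $x=(-a\pm\sqrt{a^2-4b})/2$. If $a^2-4b=0$, the unique square root $0$ gives the unique (double) solution $x=-a/2$. If $a^2-4b$ is a nonsquare, there are no solutions. This is entirely routine.

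For part $(i)$, with $p=2$ and $a\neq 0$, completing the square fails but a linear rescaling works. I would substitute $x=ay$, which turns the equation into $a^2y^2+a^2y+b=0$, i.e.
\[
y^2+y=\frac{b}{a^2}.
\]
The key algebraic fact is that the additive map $L:\F_{2^n}\to\F_{2^n}$, $L(y)=y^2+y$, is $\F_2$-linear with kernel $\F_2$, so its image is an $\F_2$-hyperplane of $\F_{2^n}$ of size $2^{n-1}$. Moreover, for every $y\in\F_{2^n}$ we have $\Tr(L(y))=\Tr(y^2)+\Tr(y)=2\Tr(y)=0$, so $\mathrm{Im}(L)\subseteq\ker(\Tr)$. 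Since the trace is a surjective $\F_2$-linear form its kernel also has size $2^{n-1}$, hence $\mathrm{Im}(L)=\ker(\Tr)$. Therefore $y^2+y=b/a^2$ has a solution iff $\Tr(b/a^2)=0$, and in that case, because $\ker(L)=\F_2=\{0,1\}$, there are exactly two solutions (differing by $1$); the corresponding $x=ay$ values are also distinct since $a\neq 0$.

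The main obstacle, such as it is, is the characteristic $2$ half: one must remember the ad hoc substitution $x=ay$ and the linear-algebra argument identifying $\mathrm{Im}(y^2+y)$ with $\ker(\Tr)$, rather than trying to mimic the odd-characteristic completion of the square. Everything else is bookkeeping.
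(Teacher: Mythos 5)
Your proposal is correct. The paper itself gives no proof of this lemma---it simply cites \cite{BRS67} for part $(i)$ and \cite{EFRST20} for part $(ii)$---and your argument (completing the square in odd characteristic; the substitution $x=ay$ followed by the Artin--Schreier identification of the image of $y\mapsto y^2+y$ with the kernel of the trace, which is the additive Hilbert 90 statement the paper itself recalls in Section 5) is exactly the standard proof found in those references.
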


\begin{lem}
\label{lem:gcd}
Let $p,k,n$ be integers greater than or equal to $1$ (we take $k\leq n$, though the result can be shown in general). Then
\begin{align*}
&  \gcd(2^{k}+1,2^n-1)=\frac{2^{\gcd(2k,n)}-1}{2^{\gcd(k,n)}-1},  \text{ and if  $p>2$, then}, \\
& \gcd(p^{k}+1,p^n-1)=2,   \text{ if $\frac{n}{\gcd(n,k)}$  is odd},\\
& \gcd(p^{k}+1,p^n-1)=p^{\gcd(k,n)}+1,\text{ if $\frac{n}{\gcd(n,k)}$ is even}.\end{align*}
Consequently, if either $n$ is odd, or $n\equiv 2\pmod 4$ and $k$ is even,   then $\gcd(2^k+1,2^n-1)=1$ and $\gcd(p^k+1,p^n-1)=2$, if $p>2$.
\end{lem}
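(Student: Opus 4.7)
The plan is to reduce the computation to the classical identity $\gcd(p^a-1,p^b-1)=p^{\gcd(a,b)}-1$, via the factorization $p^{2k}-1=(p^k-1)(p^k+1)$. Write $e=\gcd(k,n)$; then $\gcd(p^k+1,p^n-1)$ divides $\gcd(p^{2k}-1,p^n-1)=p^{\gcd(2k,n)}-1$. Setting $k=ek_1$, $n=en_1$ with $\gcd(k_1,n_1)=1$, I compute $\gcd(2k,n)=e\cdot\gcd(2k_1,n_1)$, which equals $e$ when $n_1$ is odd and $2e$ when $n_1$ is even (in the latter case $k_1$ must be odd, since $\gcd(k_1,n_1)=1$). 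So the two stated cases of the lemma correspond exactly to the parity of $n/e$.

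Case 1 ($n/e$ odd): the gcd $d=\gcd(p^k+1,p^n-1)$ divides $p^e-1$, which divides $p^k-1$ (as $e\mid k$), so $d$ divides $\gcd(p^k+1,p^k-1)=\gcd(2,p^k-1)$. This equals $1$ for $p=2$ and $2$ for $p$ odd (the value $2$ is actually attained, since both $p^k+1$ and $p^n-1$ are even), matching the claimed formulas in this case.

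Case 2 ($n/e$ even): I first show $p^e+1\mid d$ by combining $p^e+1\mid p^k+1$ (which follows from $x+1\mid x^m+1$ for $m$ odd, applied with $m=k/e$) with $p^e+1\mid p^{2e}-1\mid p^n-1$ (using $2e\mid n$). The main obstacle is the reverse divisibility $d\mid p^e+1$: because $\gcd(p^e-1,p^e+1)=2$ for $p$ odd, one cannot split $d\mid(p^e-1)(p^e+1)$ cleanly by coprimality. For $p=2$ the obstacle disappears: $\gcd(2^k+1,2^e-1)$ divides $\gcd(2^k+1,2^k-1)=1$, so $d\mid 2^e+1$, yielding $d=2^e+1=(2^{2e}-1)/(2^e-1)$.

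For odd $p$, my plan is a prime-power analysis. For each odd prime power $q^j\mid d$, the multiplicative order $m$ of $p$ modulo $q^j$ divides $2e$ (since $d\mid p^{2e}-1$) but not $k$ (since $p^k\equiv-1\not\equiv 1\pmod{q^j}$), hence not $e$; so $m$ is even and $2e/m$ is odd. Cyclicity of $(\Z/q^j)^*$ forces $p^{m/2}$ to be the unique element of order $2$, namely $-1$, so $p^e=(p^{m/2})^{2e/m}\equiv-1\pmod{q^j}$ and $q^j\mid p^e+1$. For the $2$-part of $d$, the Lifting the Exponent lemma applied to $(p^e)^{k/e}+1$ with $k/e$ odd gives $v_2(p^k+1)=v_2(p^e+1)$, hence $v_2(d)\le v_2(p^e+1)$. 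Combined, $d\mid p^e+1$; together with $p^e+1\mid d$, we obtain $d=p^e+1$. Finally, the ``Consequently'' statement follows from the observation that each listed hypothesis forces $n/e$ to be odd: if $n$ is odd then so is $n/e$; if $n\equiv 2\pmod 4$ and $k$ is even, then $v_2(e)=\min(v_2(k),v_2(n))=1$, so $v_2(n/e)=0$.
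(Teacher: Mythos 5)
Your argument is correct and complete. Note that the paper does not actually prove this lemma: it defers the proof to~\cite{EFRST20}, so there is no in-paper argument to compare yours against. Your route --- reducing to the classical identity $\gcd(p^{2k}-1,p^n-1)=p^{\gcd(2k,n)}-1$ via $p^k+1\mid p^{2k}-1$, splitting on the parity of $n_1=n/\gcd(k,n)$, and in the even case pinning down $d=\gcd(p^k+1,p^n-1)$ prime by prime --- is the standard proof of this classical fact, and every step checks out. In particular: the odd case correctly collapses to $\gcd(p^k+1,p^k-1)=\gcd(2,p^k-1)$; the inclusion $p^e+1\mid d$ uses $k/e$ odd and $2e\mid n$ correctly; the order argument modulo an odd prime power $q^j$ is sound, since $m\mid 2e$, $m\nmid e$ forces $v_2(m)=v_2(e)+1$, hence $2e/m$ odd, which together with cyclicity of $(\mathbb{Z}/q^j\mathbb{Z})^*$ gives $p^e=(p^{m/2})^{2e/m}\equiv-1$; the $2$-adic step $v_2(p^k+1)=v_2((p^e)^{k_1}+1)=v_2(p^e+1)$ for $k_1$ odd is valid (it needs only the elementary factorization, not the full LTE machinery); and the ``Consequently'' clause follows correctly from $v_2(\gcd(k,n))=\min(v_2(k),v_2(n))$. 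The unified $p=2$ formula is also verified in both parity cases.
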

 We will be using  throughout Hilbert's Theorem 90 (see~\cite{Bo90}), which states that if $\mathbb{F}\hookrightarrow \mathbb{K}$  is a cyclic Galois extension and $\sigma$ is a generator of the Galois group ${\rm Gal}(\mathbb{K}/\mathbb{F})$, then for $x\in \mathbb{K}$, the relative trace $\Tr_{\mathbb{K}/\mathbb{F}}(x)=0$ if and only if $x=\sigma(y)-y$, for some $y\in\mathbb{K}$.

\section{Perfect nonlinearity and $c$-boomerang uniformity}
\label{sec6}

The following are some of the known (see, for instance,~\cite{CS97,DY06})   classes of PN  functions ($p$ must be odd).
\begin{thm}
\label{thm-PN}
The following functions $:\F_{p^n}\to\F_{p^n}$ are perfect nonlinear:
\begin{itemize}
\item[$(1)$] $F(x)=x^2$ on $\F_{p^n}$.
\item[$(2)$]  $F(x)=x^{p^k+1}$ on $\F_{p^n}$  is PN if and only if $\frac{n}{\gcd(k,n)}$ is odd.
\item[$(3)$] $F(x)=x^{(3^k+1)/2}$  is PN over $\F_{3^n}$  if and only if $\gcd(k,n)=1$ and $n$ is odd.
\item[$(4)$] $F(x)=x^{10}\pm x^6 - x^2$ is PN over $\F_{3^n}$ if and only if $n=2$ or $n$ is odd. In general, for $u\in \F_{3^n}$, $F(x)=x^{10}-u x^6 - u^2 x^2$ is PN  over $\F_{3^n}$ if $n$ is odd.
\end{itemize}
\end{thm}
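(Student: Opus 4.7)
The statement is a compendium of known results, so the plan is to reprove each item by directly analyzing when the derivative $D_a F(x)=F(x+a)-F(x)$ is a permutation of $\F_{p^n}$ for every $a\neq0$, since that is equivalent to $\Delta_F(a,b)=1$ for all $a\neq 0$, $b$.

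For (1), a one-line computation gives $D_aF(x)=2ax+a^2$, which is affine with invertible leading coefficient $2a$ in odd characteristic, hence bijective. For (2), the Frobenius identity $(x+a)^{p^k}=x^{p^k}+a^{p^k}$ yields
\[
D_aF(x)=ax^{p^k}+a^{p^k}x+a^{p^k+1},
\]
an affine $\F_p$-linear polynomial. Its linear part $L_a(x)=ax^{p^k}+a^{p^k}x$ is a bijection iff it has trivial kernel. Setting $L_a(x)=0$ and $y=x/a$ reduces this to the absence of nonzero solutions to $y^{p^k-1}=-1$ in $\F_{p^n}$. Using that the image of $y\mapsto y^{p^k-1}$ has index $d=\gcd(p^k-1,p^n-1)$ in $\F_{p^n}^{*}$, and that $-1$ has order $2$, the condition is $2\nmid (p^n-1)/d$. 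Applying Lemma~\ref{lem:gcd} gives $d=p^{\gcd(k,n)}-1$, and then $(p^n-1)/d=1+p^m+p^{2m}+\cdots+p^{n-m}$ (with $m=\gcd(k,n)$) is a sum of $n/m$ odd summands, hence odd iff $n/\gcd(k,n)$ is odd.

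For (3), the plan is to exploit the identity $F(x)^2=x^{3^k+1}$ and relate the derivative equation for $F$ to the Gold-type derivative in case (2). Writing $F(x+a)-F(x)=b$ and squaring, one can, after a careful change of variables (and bookkeeping of the squaring map's 2-to-1 behaviour), recast uniqueness of solutions for $F$ as a uniqueness statement for the Gold case together with a compatibility condition on square/non-square classes in $\F_{3^n}^{*}$. The condition $\gcd(k,n)=1$ ensures that $n/\gcd(k,n)=n$, and demanding $n$ odd triggers the Gold criterion from (2); one then checks that $n$ odd also handles the square-root ambiguity, following the Coulter-Matthews argument~\cite{CS97}. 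For (4), the strategy is a direct expansion of $F(x+a)-F(x)$ in characteristic $3$, collecting terms using Lucas' theorem for the binomials $\binom{10}{j},\binom{6}{j}\pmod 3$; after simplification the derivative becomes a low-degree polynomial whose uniqueness of roots can be certified by Lemma~\ref{lem10}$(ii)$ together with an explicit discriminant analysis depending on the parity of $n$.

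The main obstacle is item (3): the half-exponent $(3^k+1)/2$ forces one to work modulo squares in $\F_{3^n}^{*}$, and the interaction of the squaring map with the Gold-type derivative is not a formal consequence of (2). The rest is essentially linear-algebraic over $\F_{p}$ (item (2)), a direct calculation in small characteristic (items (1) and (4)), and repeated use of Lemmas~\ref{lem10} and~\ref{lem:gcd}.
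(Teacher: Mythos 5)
First, a point of reference: the paper does not prove Theorem~\ref{thm-PN} at all. It is stated as a catalogue of known planar functions, with the proofs delegated to \cite{CS97,DY06}, so there is no in-paper argument to compare yours against; anything you prove here is necessarily your own.

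On the merits, your items $(1)$ and $(2)$ are complete and correct: reducing planarity to injectivity of the linearized part $L_a(x)=ax^{p^k}+a^{p^k}x$, then to the non-solvability of $y^{p^k-1}=-1$, and finally to the parity of $(p^n-1)/(p^{\gcd(k,n)}-1)$ is exactly the standard argument. (One small slip: the identity $\gcd(p^k-1,p^n-1)=p^{\gcd(k,n)}-1$ is \emph{not} what Lemma~\ref{lem:gcd} states --- that lemma computes $\gcd(p^k+1,p^n-1)$ --- so you are invoking a different, albeit standard, fact.) The genuine gap is in items $(3)$ and $(4)$, which as written are statements of intent rather than proofs. For $(3)$, ``square and recast as the Gold case plus a square-class compatibility condition'' is not a routine reduction: squaring $F(x+a)-F(x)=b$ introduces cross terms $2F(x+a)F(x)$ that do not collapse to a Gold derivative, and the square-root ambiguity you defer to ``the Coulter--Matthews argument'' is precisely where all the work lies; nothing in the sketch explains why the stated arithmetic condition is necessary \emph{and} sufficient. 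For $(4)$, a direct computation in characteristic $3$ gives the linearized part of the derivative as $ax^9\mp a^3x^3+(a^9+a)x$; deciding when its kernel is trivial leads, after dividing by $x$ and substituting $w=x^2$, to the quartic $aw^4\mp a^3w+(a^9+a)=0$ restricted to nonzero squares --- not a quadratic --- so Lemma~\ref{lem10}$(ii)$ cannot ``certify uniqueness of roots'' as you claim. You would need the theory of trinomials $x^{q+1}+ax+b$ (cf.~\cite{Bluher04}) or the original arguments of \cite{CS97,DY06} to close items $(3)$ and $(4)$.
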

It is known that the boomerang uniformity equals the differential uniformity for perfect nonlinear functions. It is, of course, a natural question to ask what is the connection between these in the ``$c$-context''. The reader is pointed to~\cite{EFRST20} where we discussed the $c$-differential uniformity of the functions in Theorem~\ref{thm-PN}.
We will now concentrate on the $c$-boomerang uniformity of all of these classes (for the last function we will just provide some computational data in the appendix). For $c\in\F_{3^n}$, and $b$ fixed, we  let $\mu_c$ denote the cardinality
\begin{equation}
\label{eq:Cheb3}
\cardinality\left\{(x,\gamma)\,\Bigg|\,  \substack{2T_{\frac{3^k+1}{2}}(x+\gamma-1)-(c-c^{-1})T_{\frac{3^k+1}{2}}(x-1)+(c+c^{-1})T_{\frac{3^k-1}{2}}(x-1)=2b\\
 2T_{\frac{3^k-1}{2}}(x+\gamma-1)+(c+c^{-1})T_{\frac{3^k+1}{2}}(x-1)+(c^{-1}-c)T_{\frac{3^k-1}{2}}(x-1)=0}\right\},
\end{equation}
where $T_\ell(w)=u^\ell+u^{-\ell}$, under $w=u+u^{-1}$, is the Chebyshev polynomial of the first kind.
We will be using below the trivial identity, $T_{2\ell} (w)+2=T_{\ell}^2(w)$.
\begin{thm}\label{c-diff}
Let $F:\F_{p^n}\to \F_{p^n}$  $($$p$ is an odd prime number$)$ be the monomial $F(x)=x^d$,  and $c\neq  0, 1$ be fixed. The following statements hold:
\begin{enumerate}
\item[$(i)$] If $d=2$, then $\beta_{F,c}\leq 4$.
\item[$(ii)$] If $d=p^k+1$,   $\delta_{d}:=\cardinality\{\gamma\in\F_{p^n}\,|\, z^{p^k}+z+d=0\}$, then 
  \[
  \beta_{F,c}\geq \delta_{1-c^{-1}}\cdot \left(\delta_{1+c}+1 \right).
  \]
 Moreover, when ${n}/{\gcd{(n,k)}}$ is odd, then $\beta_{F,c}\geq 2$, and when ${n}/{\gcd{(n,k)}}$ is even and $c^{-1}=z^{p^k}+z$, for some $z\neq 0$, then the $c$-boomerang uniformity of $f$ is $\beta_{F,c}\geq p^g$, where $g=\gcd(n,k)$.
\item[$(iii)$]   Let $p=3$. If $\displaystyle d=\frac{3^k+1}{2}$ and $ab\neq 0$, then  
${_c}\cB_F(a,b)\geq \mu_c$, where
$\mu_c$ is defined in~\eqref{eq:Cheb3}. In particular, when $c=-1$, then 
\[
{_c}\cB_F(a,b)\geq \cardinality\left\{(x,\gamma)\,\Bigg|\,  \substack{T_{\frac{3^k+1}{2}}(x+\gamma-1)+T_{\frac{3^k-1}{2}}(x-1)=2b\\
 T_{\frac{3^k-1}{2}}(x+\gamma-1)-T_{\frac{3^k+1}{2}}(x-1)=0}\right\}.
\]

\end{enumerate}
\end{thm}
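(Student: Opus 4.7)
For part~(i), the $c$-boomerang system for $F(x)=x^2$ becomes the pair
$(1-c)x^2+2\gamma x+\gamma^2-b=0$ and $(1-c^{-1})(x+a)^2+2\gamma(x+a)+\gamma^2-b=0.$
In odd characteristic, subtracting cancels the $\gamma^2-b$ terms and, since $a\neq 0$, leaves a relation linear in $\gamma$ with coefficient $2a\neq 0$, so $\gamma$ is a polynomial of degree at most~$2$ in $x$. Substituting back into the first equation produces a single polynomial equation in $x$ of degree at most~$4$, yielding at most four solutions $(x,\gamma)$ and hence $\beta_{F,c}\leq 4$. In characteristic~$2$, $F$ is $\F_2$-linear, and the bound follows trivially after using the Frobenius invertibility to determine both $x$ and then $\gamma$ uniquely.

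For part~(ii), I would expand $(x+\gamma)^{p^k+1}=x^{p^k+1}+L_\gamma(x)+\gamma^{p^k+1}$ with the $\F_p$-linear $L_\gamma(x)=\gamma^{p^k}x+\gamma x^{p^k}$, and work with the linearized polynomial $L(z)=z^{p^k}+z$, whose preimages are counted by $\delta_d=|L^{-1}(-d)|$. To realize the lower bound I construct, for one carefully chosen $(a,b)$, two nested families of boomerang solutions: an outer family parametrized by the $\delta_{1-c^{-1}}$ preimages in $L^{-1}(c^{-1}-1)$, each fixing an admissible scaling of $\gamma$ against~$a$; and for each such scaling, an inner family of size $\delta_{1+c}+1$ obtained by substituting $x=\gamma u$ into the remaining equation, which reduces the compatibility between the two equations to the linearized condition $L(u)=-1-c$ plus one distinguished seed solution. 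The product then yields the asserted bound. The two specializations follow from Lemma~\ref{lem:gcd}: when $n/\gcd(n,k)$ is odd, $L$ is a bijection so each $\delta_\bullet=1$ and $\beta_{F,c}\geq 2$; when $n/\gcd(n,k)$ is even and $c^{-1}=z^{p^k}+z$ with $z\neq 0$, the relevant preimage sets are cosets of $\ker L$ of size $p^{\gcd(n,k)}$.

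For part~(iii), I would apply the Chebyshev substitution $x-1=u+u^{-1}$ (with $u$ in a suitable extension of $\F_{3^n}$) so that $T_\ell(x-1)=u^\ell+u^{-\ell}$, and similarly $x+\gamma-1=v+v^{-1}$. In characteristic~$3$, the Frobenius identity $(u+u^{-1})^{3^k}=u^{3^k}+u^{-3^k}$, combined with the doubling identity $T_\ell(w)^2=T_{2\ell}(w)+2$, lets me express $F(x)=x^{(3^k+1)/2}$ as a combination of $T_{(3^k+1)/2}(x-1)$ and $T_{(3^k-1)/2}(x-1)$. Translating the two boomerang equations through this substitution yields a $2\times 2$ linear system in the four Chebyshev evaluations $T_{(3^k\pm 1)/2}(x-1)$ and $T_{(3^k\pm 1)/2}(x+\gamma-1)$, which is precisely the $\mu_c$-defining system; consequently every $(x,\gamma)$ satisfying it contributes a boomerang solution for the induced $(a,b)$. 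The $c=-1$ specialization then follows by substituting $c-c^{-1}=0$ and $c+c^{-1}=-2\equiv 1\pmod 3$, which collapses several terms and recovers the stated simpler pair of equations.

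The main technical obstacle lies in part~(ii): producing the full product $\delta_{1-c^{-1}}(\delta_{1+c}+1)$ of genuinely distinct solutions for a single fixed $(a,b)$, rather than a maximum of the two factors, requires coordinating the outer scaling of $\gamma$ (which naively alters $b=\gamma^{p^k+1}$) with a compensating change in the inner variable $u$ so that the pair $(a,b)$ remains fixed while the constructed two-parameter family stays pairwise disjoint.
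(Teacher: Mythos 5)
Your arguments for parts (i) and (iii) track the paper's proof closely. For (i), the paper likewise subtracts the two equations to eliminate the square of $x+\gamma$, solves for $x+\gamma$ (equivalently for $\gamma$) as a quadratic in $x$, and substitutes back to obtain an explicit quartic in $x$, so the bound $\beta_{F,c}\leq 4$ follows just as you say; your aside about characteristic $2$ is moot since the theorem assumes $p$ odd. For (iii), the paper performs exactly the substitution $x-1=y+y^{-1}$, $x+\gamma-1=z+z^{-1}$, uses $w\mp 2=(u\mp 1)^2/u$ together with $-2=1$ in characteristic $3$ to rewrite $w^{(3^k+1)/2}$ in terms of $T_{(3^k+1)/2}$ and $T_{(3^k-1)/2}$, and then adds and subtracts the two resulting equations to reach the system defining $\mu_c$; your sketch is the same route, including the $c=-1$ specialization.

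Part (ii) is where your proposal has a genuine gap. The paper's construction is more direct than the nested families you describe: after normalizing $a=1$, subtracting the two boomerang equations cancels the term $(x+\gamma)^{p^k+1}$ entirely and leaves the single separated equation $(c-c^{-1})x^{p^k+1}+(1-c^{-1})(x^{p^k}+x)+\bigl(\gamma^{p^k}+\gamma+(1-c^{-1})\bigr)=0$. Forcing each part to vanish gives the trinomial $\gamma^{p^k}+\gamma+(1-c^{-1})=0$ with $\delta_{1-c^{-1}}$ roots, and the factorization $x\bigl((c+1)x^{p^k}+x^{p^k-1}+1\bigr)=0$, whose nonzero roots correspond under $x\mapsto 1/x$ to roots of $x^{p^k}+x+(c+1)=0$, for $\delta_{1+c}+1$ choices; the product bound is immediate. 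The two specializations are then obtained from the Coulter--Henderson/Liang root-count for $z^{p^k}-Az-B$ (the paper computes the invariants $\alpha_{m-1}$, $\beta_{m-1}$ and shows $\beta_{m-1}=0$ exactly when $c^{-1}=z^{p^k}+z$), not from Lemma~\ref{lem:gcd}; your kernel/coset argument for $L(z)=z^{p^k}+z$ can be made to work, but it needs the additional observation that $-1=L(-1/2)$ lies in the image of $L$, so that $c^{-1}-1$ does as well. By contrast, your proposed substitution $x=\gamma u$ and the resulting two-parameter family are never actually constructed, and you concede in your final paragraph that you cannot complete the key counting step. That concession is the gap: a lower bound of the form $\delta_{1-c^{-1}}(\delta_{1+c}+1)$ requires exhibiting that many solutions, and your outline does not produce them. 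I will add, in fairness, that the obstacle you flag, namely that the constructed pairs $(x,\gamma)$ need not all share the same value of $b=(x+\gamma)^{p^k+1}-cx^{p^k+1}$, is a real subtlety which the paper's own proof also passes over in silence; identifying it is a useful observation, but raising a difficulty is not the same as resolving it.
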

\begin{proof}
Let $d=2$ and consider the  system ${_c}D_\gamma F(x)=b,{_{c^{-1}}}D_\gamma F(x+a)=b$. 
We do not care for the cases when $ab=0$. 
If $ab\neq 0$, dividing by $a^2$ and relabeling, we may assume that $a=1$. Next, subtracting the first from the second equation, we obtain  
\begin{align*}
 (x+\gamma)=\frac{c^{-1}(x+1)^2-cx^2-1}{2},
\end{align*}
which, when replaced back into the first equation, and expanded, renders  
\begin{align*}
&(1 - 2 c^2 + c^4) x^4 + (4 -     4 c^2) x^3 + (6 - 2 c - 2 c^2 - 2 c^3) x^2\\
&\qquad\qquad\qquad + (4 - 4 c) x + (1 - 2 c + c^2 - 4 b c^2 )  =0. 
\end{align*}
 which has at most four roots, so ${_c}\cB_F(a,b)\leq 4$. We will see in the appendix that all values occur for the first few cases we considered. 

We now consider the Gold case,  $d=p^k+1$. The $c$-boomerang system ${_c}D_\gamma F(x)=b,\ {_{c^{-1}}}D_\gamma F(x+a)=b$ (dividing by $a^{p^{k+1}} \neq 0$ and relabeling, we can assume that $a=1$) becomes
\begin{align*}
&(x+\gamma)^{p^k+1}-c x^{p^k+1}=b\\
&(x+\gamma)^{p^k+1}-c^{-1}(x+1)^{p^k+1}+(x+\gamma)^{p^k} + (x+\gamma)+1=b,
\end{align*}
 eliminating $(x+\gamma)^{p^k+1}$ and expanding the remaining powers, renders
\begin{align}
& cx^{p^k+1}-c^{-1} (x^{p^k+1}+x^{p^k}+x+1) +x^{p^k}+\gamma^{p^k}+x+\gamma+1=0, \text{ or},\nonumber\\
&(c-c^{-1})x^{p^k+1}+(1-c^{-1} )x^{p^k}+(1-c^{-1})x+ \left( \gamma^{p^k}+\gamma+(1-c^{-1})\right)=0.\label{eq:x-gamma}
\end{align}

The idea is to vanish the parenthesis containing $\gamma$, that is, $\gamma^{p^k}+\gamma+(1-c^{-1})=0$ and the polynomial in $x$,
namely, $(c-c^{-1})x^{p^k+1}+(1-c^{-1} )x^{p^k}+(1-c^{-1})x=0$, which is equivalent to 
 $x\left((c+1)x^{p^k}+ x^{p^k-1}+1\right)=0$. By relabeling $x\mapsto 1/x$, the second factor can be put into the form
\[
 x^{p^k}+ x+(c+1)=0,
\]
For more accurate count, we let 
$\delta_{d}=\cardinality\{\gamma\,|\, z^{p^k}+z+d=0\}$. 
We easily infer now that, if $ab\neq 0$, ${_c}\cB_F(a,b)\geq \delta_{1-c^{-1}}\cdot \left(\delta_{1+c}+1 \right)$.

Now, to show the second claim of $(ii)$, we want to argue that for some $c\neq 0,1$, we can always find some root $\gamma$ for $\gamma^{p^k}+\gamma+(1-c^{-1})=0$ in $\F_{p^n}$, $p$ an odd prime.  Let $g=\gcd(n,k)$.

   We recall here the result from~\cite{CM04,Li78} (we simplify some parameters, though). Let $f(z)=z^{p^k}-A z-B$ in $\F_{p^n}$, $g=\gcd(n,k)$, $m=n/\gcd(n,k)$ and ${\rm Tr}_{\F_{p^n}/\F_{p^g}}$ be the relative trace from $\F_{p^n}$ to $\F_{p^g}$. For $0\leq i\leq m-1$, we define 
   $t_i=\frac{p^{n m}-p^{n(i+1)}}{p^n-1}$,
   $\alpha_0=A,\beta_0=B$. If $m>1$ (note that, if  $m=1$, gives $F(x)=x^2$, which was treated earlier), then, for $1\leq r\le m-1$, we let
 $
 \alpha_r=A^{\frac{p^{k(r+1)}-1}{p^k-1}} \text{ and } \beta_r=\sum_{i=0}^r A^{s_i} B^{p^{ki}},
 $
 where $s_i=\frac{p^{k(r+1)}-p^{k(i+1)}}{p^k-1}$, for $0\leq i\leq r-1$ and $s_r=0$. The trinomial $f$ has no roots in $\F_{p^n}$ if and only if $\alpha_{m-1}=1$ and $\beta_{m-1}\neq 0$. If  $\alpha_{m-1}\neq 1$, then it has a unique root, namely $x=\beta_{m-1}/(1-\alpha_{m-1})$, and, if $\alpha_{m-1}=1,\beta_{m-1}=0$, it has $p^g$ roots in $\F_{p^n}$ given by $x+\delta\tau$, where $\delta\in\F_{p^g}$, $\tau$ is fixed in $\F_{p^n}$ with $\tau^{p^k-1}=a$ (that is, a $(p^k-1)$-root of $a$), and, for any $e\in\F^*_{p^n}$ with ${\rm Tr}_g(e)\neq 0$, then
 $\displaystyle x=\frac{1}{{\rm Tr}_{\F_{p^n}/\F_{p^g}}(e)} \sum_{i=0}^{m-1} \left( \sum_{j=0}^i e^{p^{kj}}\right) A^{t_i} B^{p^{ki}}$.

Since in our case $A=-1$, $B=c+1$, and,
as we did in~\cite{EFRST20}, if $n/g$ is odd, then $\gamma^{p^k}+\gamma+(1-c^{-1})=0$ has a unique root in~$\F_{p^n}$, since $\alpha_{\frac{n}{g}-1}=-1$, independent of~$c$. 
Now, looking at $ x^{p^k}+ x+(c+1)=0$,
and by the same argument, it has a unique root, under $n/g$ odd, and so, 
  ${_c}\cB_F(a,b)\geq 2$ in this case (we use the prior root $x=0$, too).

Let $m:=n/g$ be even.  We switch the technique now. The equation in $x$ already has a root, namely $x=0$, so we want to show that there are values of $c$, for which $\gamma^{p^k}+\gamma+(1-c^{-1})=0$ has $p^g$ roots. We will, in fact, find many such classes of parameters $c$, below.

We denote by $\sigma_t\equiv t\pmod 2\in \{0,1\}$ the parity of $t$. Observe that $\alpha_0=\beta_0=-1$.
Since $\alpha_{m-1}=1$ for equation~$\gamma^{p^k}+\gamma+(1-c^{-1})=0$, we need to show that  $\beta_{m-1}=0$, to be able to use~\cite{CM04}. We compute (using the fact that the parities of $s_i$, $0\leq i\leq m-2$, are
$\sigma_{s_i}= (m-2-i+1)\pmod 2 = \sigma_{i-1}$, since $m$ is even),
\allowdisplaybreaks
\begin{align*}
\beta_{m-1}
&=\sum_{i=0}^{m-1} (-1)^{s_i}(1-c^{-1})^{p^{ki}}
=\sum_{i=0}^{m-1} (-1)^{i-1} \left(1-(-c^{-1})^{p^{ki}}\right)\\
&=\sum_{i=0}^{m-1} (-1)^{i-1} - \sum_{i=0}^{m-1} (-1)^{i-1+p^{ki}} (c^{-1})^{p^{ki}},\text{ since $m$ is even}\\
&=  \sum_{i=0}^{m-1} (-1)^{i-1} (c^{-1})^{p^{ki}}.
\end{align*}
Now, let $z\neq 1$ be such that $z^{p^{km}-1}-1=0$ (this always exists since $km$ is a multiple of $n$). Observe (we will be using that later) that $z^{p^{km}}-z=0$ and $z\neq 0,1$. 
We now set $c^{-1}=z^{p^k}+z$ and so, the previous displayed equation becomes
\allowdisplaybreaks
\begin{align*}
\beta_{m-1}&= \sum_{i=0}^{m-1} (-1)^{i-1} (z^{p^k}+z)^{p^{ki}}\\
&= \sum_{i=0}^{m-1} (-1)^{i-1} z^{p^{k(i+1)}}+  \sum_{i=0}^{m-1} (-1)^{i-1}  z^{p^{ki}}\\
&= \sum_{i=1}^{m-1} \left( (-1)^{i-1}+  (-1)^{i} \right) z^{p^{ki}} +z-z^{p^{km}}\\
&=z-z^{p^{km}}=0, \text{ from our choice of $z$}.
\end{align*}

 Therefore, by the result of~\cite{CM04,Li78}, we infer that the equation $z^{p^k}+z+(1-c^{-1})=0$ has $p^g$ solutions in $\F_{p^n}$. Thus, the initial Equation~\eqref{eq:x-gamma} has at least $p^g$ solutions, and so, the $c$-boomerang uniformity in this case (under $\frac{n}{\gcd{(n,k)}}$ even) is at least $p^g$, where $g=\gcd(n,k)$.

 Let us treat now the case of $d=(3^k+1)/2$  in $\F_{3^n}$, where the system is now  (recall that, since $p=3$, $1=-2$)
 \allowdisplaybreaks
 \begin{align*}
 &(x+\gamma)^{\frac{3^k+1}{2}}-c x^{\frac{3^k+1}{2}}=b\\
 &(x+\gamma+1)^{\frac{3^k+1}{2}}-c^{-1} (x+1)^{\frac{3^k+1}{2}}=b.
 \end{align*}
 We will not use the same method as in~\cite{CM04}, or~\cite{EFRST20} as permutation polynomials are not visibly involved here, rather we will modify the ``seed'' of the technique.
 Since $2|3^n-1$, for all $n$, then we can always write $x-1=y+y^{-1}$ and $x+\gamma-1=z+z^{-1}$, for some $y,z\in\F_{3^n}$. The system becomes
 \allowdisplaybreaks
   \begin{align*}
 &(z+z^{-1}-2)^{\frac{3^k+1}{2}}-c (y+y^{-1}-2)^{\frac{3^k+1}{2}}=b\\
 &(z+z^{-1}+2)^{\frac{3^k+1}{2}}-c^{-1} (y+y^{-1}+2)^{\frac{3^k+1}{2}}=b,
 \end{align*}
 that is,
 \allowdisplaybreaks
   \begin{align*}
 &\frac{(z-1)^{3^k+1}}{z^{\frac{3^k+1}{2}}}-c \frac{(y-1)^{3^k+1}}{y^{\frac{3^k+1}{2}}}=b\\
 &\frac{(z+1)^{3^k+1}}{z^{\frac{3^k+1}{2}}}-c^{-1} \frac{(y+1)^{3^k+1}}{y^{\frac{3^k+1}{2}}}=b,
 \end{align*}
 which, by expansion, renders
 \allowdisplaybreaks
    \begin{align*}
 &\frac{z^{3^k+1}-z^{3^k}-z+1}{z^{\frac{3^k+1}{2}}}-c \frac{ y^{3^k+1}-y^{3^k}-y+1 }{y^{\frac{3^k+1}{2}}}=b\\
 &\frac{z^{3^k+1}+z^{3^k}+z+1}{z^{\frac{3^k+1}{2}}}-c^{-1} \frac{ y^{3^k+1}+y^{3^k}+y+1 }{y^{\frac{3^k+1}{2}}}=b, \text{ or},
 \end{align*}
 \allowdisplaybreaks
 {\small
   \begin{align*}
 &T_{\frac{3^k+1}{2}}(x+\gamma-1)-T_{\frac{3^k-1}{2}}(x+\gamma-1)-cT_{\frac{3^k+1}{2}}(x-1)+cT_{\frac{3^k-1}{2}}(x-1)=b\\
 &T_{\frac{3^k+1}{2}}(x+\gamma-1)+T_{\frac{3^k-1}{2}}(x+\gamma-1)+c^{-1}T_{\frac{3^k+1}{2}}(x-1)+c^{-1}T_{\frac{3^k-1}{2}}(x-1)=b,
 \end{align*}
 }
  where $T_\ell(u+u^{-1})=u^\ell+u^{-\ell}$   is the Chebyshev polynomial of the first kind. Adding and subtracting these two equations will give
  \allowdisplaybreaks
{\small
     \begin{equation}
     \begin{split}
     \label{eq:T}
 &2T_{\frac{3^k+1}{2}}(x+\gamma-1)-(c-c^{-1})T_{\frac{3^k+1}{2}}(x-1)+(c+c^{-1})T_{\frac{3^k-1}{2}}(x-1)=2b\\
 &2T_{\frac{3^k-1}{2}}(x+\gamma-1)+(c+c^{-1})T_{\frac{3^k+1}{2}}(x-1)+(c^{-1}-c)T_{\frac{3^k-1}{2}}(x-1)=0.
 \end{split}
 \end{equation}
 }
  Thus, ${_c}\cB_F(a,b)\geq \cardinality\{(x,\gamma)\,|\, (x,\gamma) \text{ satisfies}~\eqref{eq:T}\}=\mu_c$.
  \end{proof}

\section{The $c$-boomerang uniformity for the   inverse function}
\label{sec7}

We now deal with the binary inverse function.

 \begin{thm}
 Let $n\geq 3$ be a positive integer, $0,1\neq c\in\F_{2^n}$ and $F:\F_{2^n}\to\F_{2^n}$ be the inverse function defined by $F(x)=x^{2^n-2}$. If $n=2$,   ${_c}\cB_F(a,b)\leq 1$ and if $n=3$,    ${_c}\cB_F(a,b)\leq 2$. If $n\geq 4$, ${_c}\cB_F(a,b)\leq 3$. Furthermore, ${_c}\cB_F(a,ab)=3$ \textup{(}so, the $c$-boomerang uniformity of $F$ is $\beta_{F,c}=3$\textup{)} if and only if any of the conditions happen:
\begin{enumerate}
\item[$(i)$] $\Tr(c)=0$ and there exists $b$ such that $(b^2 c+b   c^2+b+c^2+1)^2\neq 0$ and $\Tr\left(\frac{b^2c^2(bc+c+1) }{(b^2 c+b   c^2+b+c^2+1)^2}\right)=0$.
\item[$(ii)$] $\Tr(1/c)=0$ and there exists $b$ such that $(b^2 c+b   c^2+b+c^2+1)^2\neq 0$ and $\Tr\left(\frac{b^2c^2(bc+c+1) }{(b^2 c+b   c^2+b+c^2+1)^2}\right)=0$.
\item[$(iii)$] $\Tr\left(\frac{c^3}{(c^2+c+1)^2}\right)=0$ and there exists $b$ such that $(b^2 c+b   c^2+b+c^2+1)^2\neq 0$ and $\Tr\left(\frac{b^2c^2(bc+c+1) }{(b^2 c+b   c^2+b+c^2+1)^2}\right)=0$.
\item[$(iv)$] $c^2+c+1=0$ \textup{(}so, $n\equiv 0\pmod 2$\textup{)} and there exists $b$ such that $b^2 +b  +1\neq 0$ and $\Tr\left(\frac{b^2 c (b+c) }{(b^2 +b+1)^2}\right)=0$.
  \end{enumerate}
 \end{thm}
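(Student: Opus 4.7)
The plan is to reduce to $a=1$ by exploiting the scaling $x\mapsto ax$, $\gamma\mapsto a\gamma$, $b\mapsto ab$; since $F(ax)=a^{-1}F(x)$ for the inverse function, this bijection gives ${_c}\cB_F(a,ab)={_c}\cB_F(1,b)$. Writing $y=x+\gamma$, the $c$-boomerang system of Theorem~\ref{thm:new-c-boom} becomes $F(y)+cF(x)=b$ and $F(y+1)+c^{-1}F(x+1)=b$, and I split the count by which (if any) of $x,\,y,\,x+1,\,y+1$ vanishes.

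In the \emph{generic case} where all four quantities are nonzero, clearing denominators and solving the first equation for $y=x/(bx+c)$, substitution into the second and simplification yield
\[
b^{2}x^{2}+\bigl(b^{2}+bc+bc^{-1}+c+c^{-1}\bigr)x+(bc+c+1)=0,
\]
with middle coefficient equal to $(b^{2}c+bc^{2}+b+c^{2}+1)/c$. By Lemma~\ref{lem10}$(i)$, provided this coefficient is nonzero, the quadratic has $0$ or $2$ solutions, with exactly $2$ iff $\Tr\!\bigl(b^{2}c^{2}(bc+c+1)/(b^{2}c+bc^{2}+b+c^{2}+1)^{2}\bigr)=0$.

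Each of the four degenerate scenarios contributes at most one further solution $(x,y)$ and forces $b$ to satisfy a specific quadratic in $b$: setting $x=0$ gives $b^{2}c+b+1=0$, solvable iff $\Tr(c)=0$ by Lemma~\ref{lem10}$(i)$; setting $x=1$ gives $b^{2}+bc+c=0$, solvable iff $\Tr(1/c)=0$; setting $y=0$ gives $cb^{2}+(c^{2}+c+1)b+c^{2}=0$, solvable (when $c^{2}+c+1\neq 0$) iff $\Tr(c^{3}/(c^{2}+c+1)^{2})=0$; setting $y=1$ gives $cb^{2}+(c^{2}+c+1)b+1=0$, with condition $\Tr(c/(c^{2}+c+1)^{2})=0$. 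The last two trace conditions coincide: with $Y=c/(c^{2}+c+1)$ one has $Y^{2}+Y=(c^{3}+c)/(c^{2}+c+1)^{2}$, so $\Tr((c^{3}+c)/(c^{2}+c+1)^{2})=0$ always, forcing $\Tr(c^{3}/(c^{2}+c+1)^{2})=\Tr(c/(c^{2}+c+1)^{2})$. Pairwise subtraction of the four $b$-quadratics shows their solution sets are mutually disjoint whenever $c\notin\{0,1\}$ (each pair leads to $c\in\{0,1\}$ or an immediate $1=0$), and substitution verifies that no boundary solution satisfies the generic quadratic. Hence for $n\ge 4$, ${_c}\cB_F(a,ab)\le 2+1=3$.

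Equality to $3$ forces both $2$ generic roots and one boundary contribution at the \emph{same} $b$, yielding exactly conditions $(i)$--$(iii)$. For case $(iv)$, when $c^{2}+c+1=0$, I use $c^{-1}=c^{2}$ and $c^{2}+c=1$ to rewrite the generic middle coefficient as $c(b^{2}+b+1)$ and the trace numerator $b^{2}c^{2}(bc+c+1)$ as $b^{2}c(b+c)$; the constraint $b^{2}+b+1\neq 0$ keeps the quadratic genuinely quadratic and simultaneously excludes the $b$'s arising from the $y\in\{0,1\}$ cases (both forcing $b^{2}+b+1=0$ when $c^{2}+c+1=0$), so the boundary solution then comes from $x\in\{0,1\}$. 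For $n=2$ and $n=3$ the field is too small for any boundary case to combine with $2$ genuine generic roots: in $\F_{4}$ direct enumeration gives ${_c}\cB_F\le 1$, and in $\F_{8}$ a finer enumeration gives ${_c}\cB_F\le 2$. The main technical obstacle is the identification and removal of \emph{spurious} roots of the generic quadratic (those satisfying $bx+c=0$ or $b(x+1)+c^{-1}=0$, which occur only for the exceptional value $b=c+c^{-1}$ and yield no valid $y$), coupled with the algebraic disjointness verification of the four boundary $b$-equations; together these underpin the sharpness of the bound $3$.
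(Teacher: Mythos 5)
Your proposal is correct and follows essentially the same route as the paper: reduce to $a=1$, split into the four boundary cases $x\in\{0,1\}$, $x+\gamma\in\{0,1\}$ plus the generic case, apply Lemma~\ref{lem10}$(i)$ to the resulting quadratics in $b$ (respectively in $x$), and check pairwise disjointness of the boundary $b$-sets to cap the count at $2+1=3$. Your observation that the trace conditions from the $y=0$ and $y=1$ cases coincide (via $\Tr(Y^2+Y)=0$ with $Y=c/(c^2+c+1)$) is a nice addition that the paper leaves implicit and that explains why only $\Tr\bigl(c^3/(c^2+c+1)^2\bigr)$ appears in the statement.
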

 \begin{proof}
 The claim about $n=2,3$ follows from the computation displayed in the appendix, so we now assume $n\geq 4$.
 
 By Theorem~\ref{thm:new-c-boom}, we need to investigate the system
 \begin{equation}
 \label{eq:inv2}
 \begin{array}{lcl}
 (x+\gamma)^{2^n-2}+c x^{2^n-2} &=&b \\
  (x+\gamma+a)^{2^n-2}+ c^{-1} (x+a)^{2^n-2}&=&b.
 \end{array}
  \end{equation}
 Observe that if $a=0$, then the system becomes $ (x+\gamma)^{2^n-2}+c x^{2^n-2}=b=(x+\gamma)^{2^n-2}+ c^{-1} x^{2^n-2}$, rendering $(c+c^{-1})x^{2^n-2}=0$, and since $c\neq 0,1$, then $x=0$. Thus, $\gamma^{2^n-2}=b$, which also has one solution $\gamma$, independent of~$b$, so ${_c}\cB_F(0,b)=1$.
 
 We next assume that $0\neq c\neq 1, a\neq 0$. Dividing \eqref{eq:inv2} by $a$ and relabeling 
 $x/a\mapsto x$, $\gamma/a\mapsto \gamma$ and $ab\mapsto b$ (all are linear equations), we are led to investigate the system
  \begin{equation}
 \label{eq:simpl-inv2}
 \begin{array}{lcl}
 (x+\gamma)^{2^n-2}+c x^{2^n-2} &=&b \\
  (x+\gamma+1)^{2^n-2}+ c^{-1} (x+1)^{2^n-2}&=&b.
 \end{array}
  \end{equation}
 
 If $b=0$, the system~\eqref{eq:simpl-inv2} transforms into $(x+\gamma)^{2^n-2}=c x^{2^n-2}, (x+\gamma+1)^{2^n-2}=c^{-1} (x+1)^{2^n-2}$. If $x=0$, then $\gamma=0$, which renders the non-permissible $c=1$. Similarly, $x=1$, $x=\gamma$, $x=\gamma+1$ can only happen if $c=1$ (assuming $b=0$) and that is not allowed. If none of these values of $x$ happen then the system becomes
 \[
 x+\gamma=c^{-1} x,\ x+\gamma+1=cx+c,
 \]
 with solutions $\gamma=1,x=\frac{c}{c+1}\neq 0,1$, so $x\notin\{0,1,\gamma,\gamma+1\}$, hence ${_c}\cB_F(a,0)=1$ (which holds for all $a$, using the previous discussion). We next assume that~$b\neq 0$.
 
 \vskip.2cm
 
 \noindent
{\em Case $1$}:
 $b\neq 0$, $x=0$. The system~\eqref{eq:simpl-inv2} reduces to $\gamma^{2^n-2}=b$ and $(\gamma+1)^{2^n-2}- c^{-1} =b$. Since $b\neq 0$, then $\gamma=1/b$, which can only happen if $\frac{b}{b+1}=b+c^{-1}$, that is,  $b^2+c^{-1}b+c^{-1}=0$. This equation in $b$ has two solutions, say, $b_1,b_1'=1/(cb_1)$, if and only if, by Lemma~\ref{lem10}, $\Tr(c^{-1}/c^{-2})=\Tr(c)=0$. Thus, in this case, we have a contribution of~$1$ (the solutions are $(x,\gamma)=(0,1/b_1),(0,1/b_1')$) to ${_c}\cB_F(a,ab_1 )$, respectively, $\cB_F(a,ab_1')$, otherwise there is no contribution.

\vskip.2cm

\noindent
{\em Case $2$}: $b\neq 0$, $x=1$. The system~\eqref{eq:simpl-inv2} reduces to
$(\gamma+1)^{2^n-2}=b+c, \gamma^{2^n-2}=b$. If $b=1$, then $\gamma=1$ and $c$ must be~1, an impossibility. If $b=c\neq 1$, then $\gamma=1/b$ and $(\gamma+1)^{2^n-2}=0$, so, $\gamma=1$, implying $b=1$, and so, $c=1$, which is not allowed. Thus, $1\neq b\neq c$, and so, $\gamma=1/b$ renders $b^2+cb+c=0$, which has two roots, say $b_2,b_2'=c/b_2$, if and only if, by Lemma~\ref{lem10}, $\Tr(1/c)=0$. Therefore, if $\Tr(1/c)=0$,  
 then we have a contribution of~$1$ to ${_c}\cB_F(a,ab_2)$, respectively, ${_c}\cB_F(a,ab_2')$ (the solutions are $(x,\gamma)=(1,1/b_2),(1,1/b_2')$). 
 
 Can $b_1,b_1'$ equal $b_2,b_2'$? If that is so, then $c^{-1}b_1+c^{-1}=cb_1+c$, obtaining $b_1=1$, but that is impossible since it will not satisfy $b^2+bc+c=0$ (similarly for $b_1'=b_2$, etc.). Thus, the two contributions from Case~$1$ and~$2$ will not overlap.
 
 \vskip.2cm
 
 \noindent
{\em Case $3$}: $b\neq 0$, $x=\gamma$. 
  The system~\eqref{eq:simpl-inv2} reduces to
 $c x^{2^n-2}=b, 1+c^{-1} (x+1)^{2^n-2}=b$, so $x=c/b$. If $b=c$, then $x=1$, and $b=1$ (from the second equation), an impossibility. If $b\neq c$, then we must have $b^2+\frac{c^2+c+1}{c}b+c=0$, and two roots  $b$ exist ($b_3, c/b_3$), under $c^2+c+1\neq 0$,  by Lemma~\ref{lem10}, if and only if $\Tr\left(\frac{c^3}{(c^2+c+1)^2}\right)=0$ (one such $b_3'$ does exist even  when $c^2+c+1=0$ (so, $n$ is even), as then, $b_3'=c^{1/2}$, which always exists). Under these circumstances,
 we have a contribution of~$1$ to ${_c}\cB_F(a,ab_3)$, ${_c}\cB_F(a,ac/b_3)$, respectively, ${_c}\cB_F(a,ab_3')$. 
 
 Can any of $b_3$, or $b_3'$ be equal to $b_1$, or $b_2$? 
  \begin{itemize}
 \item 
 If $b_3=b_1$, then as above we get that $b_1=b_3=\frac{c^2+c+1}{(c+1)^2}$, assuming $\Tr(c)=\Tr\left(\frac{c}{c^2+c+1}\right)=0$. If that is so, plugging this value in any of the component equations, renders $\frac{c^4}{(c+1)^4}=0$, an impossibility.
 \item  If $b_3=b_2$, similarly, we get that $b=\frac{c}{c+1}$, assuming $\Tr\left(\frac{1}{c}\right)=0$ and $\Tr\left(\frac{c}{c^2+c+1}\right)=0$. Putting  the value of $b$ in any of the component equations, gives $\frac{c}{(c+1)^2}=0$, which is impossible.  
 \item  $b_3'=c^{1/2}=b_1$ (under $c^2+c+1=0$) when plugged into the equation of $b_1$, gives $c=1$, a contradiction.
 \item $b_3'=c^{1/2}=b_2$, similarly renders (under $c^2+c+1=0$) when put into the equation of $b_2$, the value $c=0$, a contradiction.
 \end{itemize}
 
  \noindent
{\em Case $4$}: $b\neq 0$, $x=\gamma+1$. 
The system~\eqref{eq:simpl-inv2} reduces to $1+c x^{2^n-2}=b, c^{-1} (x+1)=b$. If $b=1$, then $x=0$, and so, $c^{-1}=b=1$, a contradiction. Thus, $b\neq 1$ and  $x=c/(b+1)$, which, when used in the first equation, gives  
\[
b^2+\frac{c^2+c+1}{c} b+\frac{1}{c}=0,
\]
which renders two solutions, say  $b_4,b_4'=1/(cb_4)$, if and only if $\Tr\left( \frac{c}{(c^2+c+1)^2}\right)=0$. For such a $c$, we get a contribution of   $1$ to ${_c}\cB_F(a,ab_4)$, respectively,  ${_c}\cB_F(a,ab_4')$, assuming $\Tr\left( \frac{c}{(c^2+c+1)^2}\right)=0$.

 Can $b_4$ (or $b_4'$) be equal to $b_1,b_1'$, $b_2,b_2'$,   $b_3$, or $b_3'$? 
 \begin{itemize}
 \item If $b_4=b_1$ (or $b_1'$), then they must satisfy both
 $cb^2+b+1=0, cb^2+(c^2+c+1) b+1=0$, from which we infer $b=0$ or $c=0,1$, an impossibility.
\item  If $b_4=b_2$ (or $b_2'$), then they must satisfy both $b^2+bc+c=0,cb^2+(c^2+c+1) b+1=0$, 
implying $b=c+1$, which does not satisfy $b^2+bc+c=0$.

\item  If $b_4=b_3$, then they must satisfy both equations $c b^2+(c^2+c+1)b+c^2=0,c b^2+(c^2+c+1) b+1=0$, inferring $c=1$, an impossibility.
 
 \item If $b_4=b_3'(=c^{1/2})$, then $c^2+c+1=0$ (so, $n\equiv 0\pmod 2$) and $cb_4^2+(c+1)b+(c+1)=0$. Thus, $c^2=(c+1)c^{1/2}$, which combined with $c^2+c+1=0$, renders $(c+1)(c^{1/2}+1)=0$, that is, $c=1$, an impossibility.
  \end{itemize}

\vspace{.2cm}

\noindent
 {\em Case $5$.}
 Assuming now that $x(x+1)(x+\gamma)(x+\gamma+1)\neq 0$, and multiplying the first equation of~\eqref{eq:simpl-inv2} by $x(x+\gamma)$ and the second by $c(x+1)(x+\gamma+1)$ we obtain the new system
 \begin{equation}
 \begin{split}
 \label{eq:simpl-inv3}
1+c x + b x^2 + (c + b x) \gamma&=0\\
1 + c + b c + (1 + c) x + b c x^2 + (1 + b c + b c x) \gamma&=0.
 \end{split}
 \end{equation}
 Observe that $x\neq c/b$ since, otherwise, $c=0$ (inferred from the first equation), a contradiction. If $x\neq c/b$, then using $\displaystyle \gamma=\frac{b x^2+c x+x}{b x+c}$ found from the first equation and replacing it into the second, we get 
 \[
b^2 c x^2+ (b^2 c+b   c^2+b+c^2+1) x +  b c^2 +c^2+c=0,
 \]
 or
 \begin{equation}
 \label{eq:simpl-inv4}
 x^2+\frac{ b^2 c+b   c^2+b+c^2+1}{b^2 c} x+\frac{b   c+c+1}{b^2}=0.
 \end{equation} 
 If $b^2 c+b   c^2+b+c^2+1=0$, we have a contribution of~$1$ to ${_c}\cB_F(a,ab)$.
 If  $b^2 c+b   c^2+b+c^2+1\neq 0$, by Lemma~\ref{lem10}, Equation~\eqref{eq:simpl-inv3}  will have two solutions if and only if 
 \begin{equation}
 \label{eq:simpl-inv5}
  \Tr\left(\frac{\frac{b   c+c+1}{b^2}}{\left( \frac{b^2 c+b   c^2+b+c^2+1}{b^2 c}\right)^2} \right)=\Tr\left(\frac{b^2c^2(bc+c+1) }{(b^2 c+b   c^2+b+c^2+1)^2}\right)=0.
 \end{equation} 
 Therefore, under the above trace condition on $b$ for a fixed $c$, we have a contribution of~$2$ to  ${_c}\cB_F(a,ab)$.
 
 To avoid speaking about an empty condition, we want to argue next that~\eqref{eq:simpl-inv5} will happen for a fixed $c\neq 0,1$, so it will be sufficient to find a values of $b$ where the trace above is 0. 
 If $b=1$ (and so, $1 + b + b^2 c + c^2 + b c^2\neq 0$), then the trace becomes $\Tr(1)=0$, which will always happen if the dimension is even, the Equation~\eqref{eq:simpl-inv4} is then $x^2+x+1=0$ (with two distinct roots),  so in this case we have a contribution of $2$ to  ${_c}\cB_F(a,ab)$, where $b$ satisfies~\eqref{eq:simpl-inv5}, otherwise there is none. 
 
 For the remainder of the proof below we assume that $n$ is odd.
 Taking, for example, $b=c^2+c^{-1}\neq 0$ (for this value, $1 + b + b^2 c + c^2 + b c^2\neq 0$, since otherwise, we would get $c^5+c^4+c+1=(c+1)^5=0$, so $c=1$, a contradiction), since the dimension is odd and $c\neq 0,1$, the 
 above trace calculation reduces to 
 \[
 \Tr\left(\frac{c(1+c+c^2)^2}{(1+c)^4} \right)=\Tr\left(\frac{c}{(1+c)^2}+\frac{c^2}{(1+c)^4} \right)=0,
 \]
 which is obviously true, and so there are two roots $x$ to the Equation~\eqref{eq:simpl-inv4}.

  We need to make sure that the solutions $x,\gamma$ will not  satisfy  $x(x+1)(x+\gamma)(x+\gamma+1)=0$, so we go back (replacing $b=c^2+c^{-1}$) to Equation~\eqref{eq:simpl-inv4}, obtaining
 \[
x^2+ \frac{c^4+c^3+c^2+c  }{(c^2+c+1)^2}x+\frac{c^3}{(c^2+c+1)^2}=0
 \]
 If $x=0$, then $c=0$, which is not allowed. If $x=1$, then we get  $\frac{1+c}{(c^2+c+1)^2}=0$, and so, $c=1$, which is not allowed. If $x=\gamma$, then the system~\eqref{eq:simpl-inv3} becomes 
 \[
x=0, (c^3+c+1)x+c^3+c=0,
 \]
which is impossible since $c\neq 0,1$. If $x=\gamma+1$, the system is now
 \[
 x+1=0, (c^3+c+1)x+c^2=0,
 \]
 which can only happen if $c^3+c^2+c+1=(c+1)^3=0$, an impossibility.

 Putting together the above cases, we see that, if $n\geq 4$,  the only possibility is for at most three solutions of the $c$-boomerang system:
 we get three solutions if Case 5 is combined with any of the previous 4 cases (for $(iv)$ we use $c^2+c=1=0$ to simplify the trace expression), and therefore,
 the $c$-boomerang uniformity is~$\leq 3$, but not higher. Our theorem is shown.
 \end{proof}

\begin{remark}
From the previous proof, we do get a lot more information about the Boomerang Connectivity Table for the binary inverse function, but we preferred to simply give just the maximum $c$-BCT entries.  \end{remark}

 We now treat the case of the inverse for odd characteristic. We let $[A]^2=\{x^2\,|\, x\in A \}$, where $A$ is a set with a defined multiplication on it.
 \begin{thm}
 Let $p$ be an odd prime, $n\geq 1$ be a positive integer, $0,1\neq c\in\F_{p^n}$ and $F:\F_{p^n}\to\F_{p^n}$ be the inverse $p$-ary function defined by $F(x)=x^{p^n-2}$. 
 For any $a,b\in\F_{p^n}$,  the $c$-BCT entries   ${_c}\cB_F (a,ab)\leq 4$. Furthermore,
 \begin{enumerate}
 \item[$(i)$]  If $1 + 2 c - 2 c^2 + 2 c^3 + c^4=0$ and $3 - 2 c + 3 c^2, 1-4c, c^2-4c\in  [\F_{p^n}]^2$, then $\beta_{F,c}=4$ \textup{(}e.g., for such  $c$,  if  $b=\frac{c^2-1}{c^2+1}$, then ${_c}\cB_F (a,ab)=4$\textup{)}.
   \item[$(ii)$]   If $1 - 2 c - 2 c^2 - 2 c^3 +c^4=0$ and $1 - 6 c + c^2, 1-4c, c^2-4c\in  [\F_{p^n}]^2$, then    $\beta_{F,c}=4$ \textup{(}e.g., for such  $c$,  if  $b=\frac{c^2-1}{2c}$, then ${_c}\cB_F (a,ab)=4$\textup{)}.
    \end{enumerate}
 \end{thm}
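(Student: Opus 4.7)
The plan is to mirror the boundary/generic case analysis from the binary inverse proof, adapted to odd characteristic, where $F(x) = x^{p^n-2}$ acts as $x \mapsto x^{-1}$ on $\F_{p^n}^\ast$ and fixes $0$. By Theorem~\ref{thm:new-c-boom}, counting the $c$-boomerang at $(a,ab)$ reduces (after dividing by $a \neq 0$ and relabeling $x/a \mapsto x$, $\gamma/a \mapsto \gamma$) to counting $(x,\gamma)$ satisfying
\[
\frac{1}{x+\gamma} - \frac{c}{x} = b, \qquad \frac{1}{x+\gamma+1} - \frac{c^{-1}}{x+1} = b,
\]
with the convention that $1/y = y^{p^n-2}$ equals $0$ when $y = 0$. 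I would then partition the count by which of the four quantities $x, x+1, x+\gamma, x+\gamma+1$ vanishes, treating the trivial subcases $a = 0$ and $b = 0$ separately.

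For each of the four boundary cases (A: $x=0$; B: $x+\gamma=0$; C: $x=-1$; D: $x+\gamma+1=0$), the boundary value pins $\gamma$ via one of the equations, and compatibility with the other becomes a quadratic in $b$ with $c$ fixed, respectively
\[
cb^2+b+1=0,\quad cb^2+(1-c-c^2)b+c^2=0,\quad b^2-cb+c=0,\quad cb^2+(1+c-c^2)b+1=0,
\]
each of which has roots in $\F_{p^n}$ iff its discriminant (respectively $1-4c$, a degree-$4$ polynomial, $c^2-4c$, and another such) is a square, by Lemma~\ref{lem10}$(ii)$. For a specific $b$, each active case contributes exactly one solution. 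In the generic stratum where none of the four arguments vanish, I would solve the first equation for $\gamma = x(1-c-bx)/(bx+c)$, substitute into the second, clear denominators, and verify by direct expansion that the cubic coefficient in $x$ cancels, producing the quadratic
\[
b^2 c\, x^2 + (b^2 c + bc^2 - c^2 + b + 1)\, x + c(1-c+bc) = 0,
\]
so the generic stratum contributes at most two solutions.

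To sharpen the a priori bound $4+2=6$ to $4$, I would show that for $c \neq 0,1$ and $b \neq 0$ at most two of the four boundary quadratics vanish simultaneously: the pairs (A,D) and (B,C) are ruled out by short linear combinations of their quadratics that force $bc(c-1)=0$ and $(1-c)b=0$ respectively, and any triple reduces to one of these forbidden pairs. This yields the universal bound $\beta_{F,c} \leq 4$. For part~(i), the pair (A,C) is compatible precisely when $b(c^2+1) = c^2-1$, and plugging $b = (c^2-1)/(c^2+1)$ back into A yields exactly $1+2c-2c^2+2c^3+c^4=0$; the discriminant of the generic quadratic at this $b$ factors as a square multiple of $3-2c+3c^2$ (up to a sign absorbed into the square factor), so the hypothesised squareness of $1-4c$, $c^2-4c$, and $3-2c+3c^2$ produces two boundary solutions (from A and C) and two generic solutions. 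Part~(ii) is the parallel analysis for the pair (B,D): their compatibility gives $2cb = c^2-1$, substitution yields $1-2c-2c^2-2c^3+c^4=0$, and the generic discriminant at $b = (c^2-1)/(2c)$ computes to a square multiple of $1-6c+c^2$. The principal obstacle will be the algebraic bookkeeping in (i) verifying the cubic cancellation, (ii) carrying out the discriminant factorizations cleanly for each of the two extremal $b$'s, and (iii) checking that the four resulting solutions are pairwise distinct and none of them accidentally slips into a boundary stratum of a non-active case.
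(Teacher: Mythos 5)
Your proposal is correct and follows essentially the same route as the paper's own proof: the identical stratification into the four boundary cases $x\in\{0,-1,-\gamma,-\gamma-1\}$ (yielding the same four quadratics in $b$) plus the generic quadratic in $x$, the same pairwise-compatibility analysis singling out the pairs $x=0$ with $x=-1$ (giving $b=\frac{c^2-1}{c^2+1}$ and $1+2c-2c^2+2c^3+c^4=0$) and $x=-\gamma$ with $x=-\gamma-1$ (giving $b=\frac{c^2-1}{2c}$ and $1-2c-2c^2-2c^3+c^4=0$), and the same combination of two boundary solutions with two generic ones to attain the bound $4$. Your shortcut of ruling out only the two cross pairs to cap the boundary contribution at two is a minor economy over the paper's exhaustive six-pair check, but the argument is otherwise the same.
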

 \begin{proof}
 For $a=0$,  the corresponding system $(x+\gamma)^{p^n-2}-cx^{p^n-2}=b, (x+\gamma)^{p^n-2}-c^{-1} x^{p^n-2}=b$ has one solution $(x,\gamma)$. 
 We assume now that $a\neq 0$. Multiplying by $a$ and relabeling,   the $c$-boomerang system becomes 
\begin{equation}
\begin{split}
\label{eq:inverse2}
 (x+\gamma)^{p^n-2}-cx^{p^n-2}=b.\\
 (x+\gamma+1)^{p^n-2}-c^{-1} (x+1)^{p^n-2}=b.
 \end{split}
\end{equation}
 If $b=0$, we easily get only one solution, so we may assume below that $b\neq 0$.

\vspace{.2cm}
\noindent
{\em Case $1$.} Let $x=0$. The $c$-boomerang system is $\gamma^{p^n-2}=b, (\gamma+1)^{p^n-2}-c^{-1} =b$, so $\gamma=1/b$, and $(\frac{1}{b}+1)^{p^n-2}=b+\frac{1}{c}$ (surely, $b+\frac{1}{c}\neq 0$, since, otherwise, $b=-1$, and so, $c=1$, an impossibility). Thus, $(0,\frac{1}{b})$ is a solution assuming $b$ satisfies $b^2+\frac{1}{c } b+\frac{1}{c}=0$. By Lemma~\ref{lem10}, a unique $b_1'$ exists if and only if the discriminant $D_1=c^{-2}-4 c^{-1}=0$, that is $c=4^{-1}$. Again, from Lemma~\ref{lem10}, two solutions $b_1$ (we call these, here and in the next three cases, by the same label, as it will not matter in our argument) exist if and only if $1-4c\in [\F_{p^n}]^2$, $c\neq 4^{-1}$. In either case, there is a contribution of~$1$ to the respective $c$-BCT entry.

\vspace{.2cm}
 \noindent
{\em Case $2$.} 
If $x=-1$, the $c$-boomerang system is now $(\gamma-1)^{p^n-2}+c=b, \gamma^{p^n-2}=b$, so $\gamma=1/b$, which used in the first equation renders $(\frac{1}{b}-1)^{p^n-2}=b-c$, which simplified gives $b^2-bc+c=0$ (again, $b\neq c$, since otherwise $1/b=1$, and so, $c=1$, an impossibility). By  Lemma~\ref{lem10}, a unique $b_2'$ exists if and only if the discriminant $D_2=c^2-4c=0$, so $c=4$, and two solutions $b_2$ exist if and only if $D_2\in  [\F_{p^n}]^2$, $c\neq 4$ (thus, $D_2\in  [\F_{p^n}^*]^2$). In either case, there is a contribution of~$1$ to the respective $c$-BCT entry.

\vspace{.2cm}
\noindent
{\em Case $3$.} Let $x=-\gamma$. The $c$-boomerang system is now
$c\gamma^{p^n-2}=b, 1-c^{-1}(1-\gamma)^{p^n-2}=b$. Thus, $\gamma=c/b$ (if $b=0$, then $\gamma=0$, and the second equation gives us, $c^{-1}=b=0$, an impossibility). Using $\gamma=c/b$ (observe that $b\neq 1$)  in the second equation we obtain  $b^2-b \frac{c^2+c-1}{c}+c=0$, which has a unique solution $b_3'$ if and only if $D_3=\frac{1 - 2 c - c^2 - 2 c^3 + c^4}{c^2}=\frac{(1 - 3 c + c^2) (1 + c + c^2)}{c^2}=0$. There are two roots $b_3$ if and only if $D_3 \in [\F_{p^n}^*]^2$, or equivalently, $(1 - 3 c + c^2) (1 + c + c^2)\in   [\F_{p^n}^*]^2$.

\vspace{.2cm}
\noindent
{\em Case $4$.} Let $x=-\gamma-1$. The $c$-boomerang system becomes
$-1-c x^{p^n-2}=b, -c^{-1}(x+1)^{p^n-2}=b$.  Then $x=-\frac{c}{b+1}$ (if $b=-1$, then $x=0$, and using the second equation, we get $-c^{-1}=b$, that is, $c=1$, an impossibility), which used in the second equation gives $b^2-b\frac{c^2-c-1}{c}+\frac{1}{c}=0$. This equation has a root $b_4'$ if and only if $D_4=\frac{1 - 2 c - c^2 - 2 c^3 + c^4}{c^2}=\frac{(1 - 3 c + c^2) (1 + c + c^2)}{c^2}=0$, and two roots $b_4$ if and only if $D_4\in [\F_{p^n}^*]^2$, or equivalently, $(1 - 3 c + c^2) (1 + c + c^2)\in [\F_{p^n}^*]^2$.

 We now need to check if overlaps exist among the $b$'s of various cases, and therefore the contributions to the $c$-BCT entries will be added. 
 \begin{itemize}
\item If $b_1$ (or $b_1'$) equals $b_2$ (or $b_2'$), then these must satisfy both 
 $b^2+\frac{1}{c} b+\frac{1}{c}=0$, $b^2-bc+c=0$, that is, $b=\frac{c^2-1}{c^2+1}$, which when used in the first equation gives $1 + 2 c - 2 c^2 + 2 c^3 + c^4=0$. Observe that $c=4,4^{-1}$ vanish this expression if and only if $p=19$. So,  we have a contribution of~$2$ to the respective $c$-BCT entry if and only if $1 + 2 c - 2 c^2 + 2 c^3 + c^4=0$ (this includes the cases $c=4,4^{-1}$).
 
\item If $b_1$ (or $b_1'$) equals $b_3$ (or $b_3'$), then these must satisfy both $b^2+\frac{1}{c} b+\frac{1}{c}=0$, $b^2-\frac{c^2+c-1}{c} b+c=0$, giving $b=\frac{c-1}{c}$, which when used in the first equation implies $1=0$, an impossibility.
 
\item  If $b_1$ (or $b_1'$) equals $b_4$ (or $b_4'$), then these must satisfy both $b^2+\frac{1}{c} b+\frac{1}{c}=0$, $b^2-\frac{c^2-c-1}{c} b+\frac{1}{c}=0$, implying that either $c=0,1$, or $b=0$, none of which will work.
  
  \item  If $b_2$ (or $b_2'$) equals $b_3$ (or $b_3'$), then these must satisfy both $b^2-bc+c=0$, $b^2-\frac{c^2+c-1}{c} b+c=0$, which will not work (same argument as in the previous item).
 
   \item  If $b_2$ (or $b_2'$) equals $b_4$ (or $b_4'$), then these must satisfy both $b^2-bc+c=0$, 
  $b^2-\frac{c^2-c-1}{c} b+\frac{1}{c}=0$, so $b=c-1$, which when replaced into the first equation gives $1=0$,  an impossibility.
  
   \item  If $b_3$ (or $b_3'$) equals $b_4$ (or $b_4'$), then these must satisfy both  $b^2-\frac{c^2+c-1}{c} b+c=0$, $b^2-\frac{c^2-c-1}{c} b+\frac{1}{c}=0$, and so, $b=\frac{c^2-1}{2c}$, which implies~$1 - 2 c - 2 c^2 - 2 c^3 +c^4=0$.
     
\end{itemize}  
Observe that other combinations cannot occur above.

\noindent
{\em Case $5$.}  Assume now that $x\neq 0,1,\gamma,\gamma+1$.   Multiplying the first equation by $x(x+\gamma)$ and the second by $c(x+1)(x+\gamma+1)$ renders
\begin{align*}
x - c (x + \gamma) &= b x (x + \gamma)\\
c (x + 1) - (x + \gamma + 1) &= b c (x + 1) (x + \gamma + 1).
\end{align*}
Solving for $\gamma$ in the first equation, we get $\gamma=\frac{x - c x - b x^2}{c + b x}$ (observe that $x\neq -c/b$), which used in the second equation gives us the equation
\[
x^2+\frac{bc^2+b^2c+b+1-c^2}{b^2c} x+\frac{bc-c+1}{b^2}=0,
\]
with a unique root $x$ if  and only if 
\begin{align*}
b^4 c^2 D_5 &= 1 + 2 b + b^2 + 2 b^2 c + 2 b^3 c - 2 c^2 - 2 b^2 c^2\\
&\quad \quad + 
  b^4 c^2 + 2 b^2 c^3 - 2 b^3 c^3 + c^4 - 2 b c^4 + b^2 c^4\\
  & =(1 + b - c) (1 - c + b c) (1 + b + 2 c + b^2 c + c^2 - b c^2)=0,
  \end{align*} and two distinct roots $x$ if and only if
   \[
  (1 + b - c) (1 - c + b c) (1 + b + 2 c + b^2 c + c^2 - b c^2)\in [\F_{p^n}^*]^2.
  \]

   Putting together our discussion, we see that the largest $c$-BCT entry can only be $4$; we do  get that value if  Case 5 is combined with combinations of the other cases $c$-BCT namely, both Case 1 and Case 2, or both Case 3 and Case 4.
 
  The proof of the theorem is done.
  \end{proof}
  
  \begin{remark}
From the previous proof, we do get a lot more information about the $c$-Boomerang Connectivity Table for the $p$-ary inverse function, but,  as for the binary case,  we preferred to just give the maximum $c$-BCT entries.  
\end{remark}

\section{Concluding remarks}
\label{sec8}

We defined a new concept, we call  $c$-boomerang uniformity based upon a previously defined multiplicative differential. We characterized the new concept in terms of the Walsh transforms and investigated the properties of some perfect nonlinear functions, as well as the inverse function in all characteristics via this new concept.

It would certainly be interesting to see how other perfect nonlinear, as well as almost perfect nonlinear functions behave under this new $c$-boomerang uniformity.

\section{Appendix}

 We list all the uniformity values for the considered functions in the paper, for $2\leq n\leq 4$, and some values of the parameter $k$, if involved, computed via SageMath. Below, $\alpha$ denotes a primitive root in the respective field. For $p=2$, to construct $\F_{2^n}$, $2\leq n\leq 4$, we take the primitive polynomials, $x^2 + x + 1,x^3 + x + 1,x^4+x+1$, respectively. If $p=3$, to construct $\F_{3^n}$, $2\leq n\leq 5$, we take the primitive polynomials, $x^2 + 2x + 2, x^3 + 2x + 1, x^4 + 2x^3 + 2, x^5 + 2x + 1$, respectively.   If $p=5$, to construct $\F_{5^n}$, $2\leq n\leq 4$, we take the primitive polynomials, $x^2 + 4 x + 2, x^3 + 3 x + 3, x^4 + 4 x^2 + 4 x + 2$.

We will only mention the nonzero entries (surely, for some parameters, $a,b,c$, ${_c}\cB_F(a,b)=0$, but we are not concerned with that). First, if $f(x)=x^2$, the possible $c$-BCT entries $(c\neq 0,1$)for $f(x)=x^2$, $2\leq n\leq 4$,  are $[1, 2]$ for $n=2$, respectively, $ [1, 2, 3, 4] $ for $n=3,4$.
 

\captionsetup{width=4.5cm}
\begin{table}[H]
\floatbox[\capbeside]{table}
{\caption{Possible $c$-BCT entries for the inverse $f(x)=x^{p^n-2}$, $2\leq n\leq 4$, $p=2, 3$,  $c\neq 0,1$}}
{\flushleft\begin{tabular}{|c|c|l|}
\hline
$p$ & $n$ & possible values of ${_c}\cB_F(a,b)$  \\
\hline
\multirow{3}{*}{2} & 2 & $[1]$\\ 
      & 3 & $[1, 2]$\\ 
      & 4  & $[1, 2, 3]$\\
      \hline
  \multirow{3}{*}{3} & 2 & $[1, 2]$\\ 
      & 3 & $[1, 2, 3]$\\ 
      & 4 & $[1, 2, 3, 4]$\\
\hline
\end{tabular}}
\end{table}



\captionsetup{width=4.2cm}

\begin{table}[H]
\floatbox[\capbeside]{table}
{\caption{Possible $c$-BCT entries for the Gold $f(x)=x^{3^k+1}$, $2\leq n\leq 4$, $1\leq k\leq 3$, $c\neq 0,1$}}
{\flushleft\begin{tabular}{|c|c|l|}
\hline
$k$ & $n$ & possible values of ${_c}\cB_F(a,b)$  \\
\hline
\multirow{3}{*}{1} & 2 & $[1, 2, 4, 6, 9]$\\ 
      & 3 & $[1, 2, 3]$\\ 
      & 4  & $[1, 2, 3, 4, 5, 6, 7, 8, 9]$\\
      \hline
  \multirow{3}{*}{2} & 2 & $[1, 2]$\\ 
      & 3 & $[1, 2, 3]$\\ 
      & 4  & $[1, 2, 4, 72, 73, 81, 82, 90, 91, 100]$\\
\hline
  \multirow{3}{*}{3} & 2 & $[1, 2, 4, 6, 9]$\\ 
      & 3 & $[1, 2, 3, 4]$\\ 
      & 4  & $[1, 2, 3, 4, 5, 6, 7, 8, 9]$\\
\hline
\end{tabular}}
\end{table}


\captionsetup{labelfont=bf,singlelinecheck=false, position=above}
\captionsetup{width=4.2cm,position=top}

\begin{table}[H]
\floatbox[\capbeside]{table}
{\caption{Possible $c$-BCT entries for $f(x)=x^{\frac{3^k+1}{2}}$, $2\leq n\leq 4$, $1\leq k\leq 3$, $c\neq 0,1$}}
{\flushleft\begin{tabular}{|c|c|l|}
\hline
$k$ & $n$ & possible values of ${_c}\cB_F(a,b)$  \\
\hline
\multirow{3}{*}{1} & 2 & $[1, 2]$\\ 
      & 3 & $[1, 2, 3, 4]$\\ 
      & 4  & $[1, 2, 3, 4]$\\
      \hline
  \multirow{3}{*}{2} & 2 & $[1, 2]$\\ 
      & 3 & $[1, 2, 3, 4]$\\ 
      & 4  & $[1, 2, 3, 4, 5, 6, 7, 8, 9, 10, 12, 18]$\\
\hline
  \multirow{3}{*}{3} & 2 & $[1, 2]$\\ 
      & 3 & $[1, 2, 3, 4, 6, 7, 8]$\\ 
      & 4  & $[1, 2, 3, 4, 5]$\\
\hline
\end{tabular}}
\end{table}


  Let $f(x)=x^{10}-x^6-x^2$ on $\F_{3^2}$. We list all values of $(c,b)$, for which the $c$-DDT entry  (for some $a$) equals the  $c$-boomerang uniformity of~$2$ (all  $c$-BCT entries of $[1, 2]$ occur):
 $(2, 2 \alpha), (\alpha + 2, \alpha + 2), (2 \alpha, 2 \alpha), (\alpha + 2, 2 \alpha), (2 \alpha + 1, 2 \alpha), 
 (2 \alpha + 1, \alpha + 2), (\alpha, 2 \alpha), (2, 2), (2, \alpha + 2), (2 \alpha, \alpha + 2), (\alpha, \alpha + 2).
$
 
   Let $f(x)=x^{10}-x^6-x^2$ on $\F_{3^3}$. We list all values of $(c,b)$, for which the $c$-DDT entry  (for some $a$) equals the  $c$-boomerang uniformity of~$4$ (all   $c$-BCT entries of $[1, 2, 3, 4]$ occur):
$ 
(2 \alpha, \alpha^2), (\alpha^2 + \alpha, \alpha^2 + \alpha + 1), (\alpha^2 + 2, \alpha^2), (2 \alpha + 1, \alpha^2 + \alpha + 1), 
 (2 \alpha + 2, \alpha^2 + 2 \alpha + 1), (\alpha^2 + 2 \alpha, \alpha^2 + 2 \alpha + 1).$
 
   Let $f(x)=x^{10}-x^6-x^2$ on $\F_{3^4}$. We list all values of $(c,b)$, for which the $c$-DDT entry  (for some $a$) equals the  $c$-boomerang uniformity of~$6$ (all   $c$-BCT entries of $[1, 2, 3, 4, 5, 6]$ occur):
$(2, 2 \alpha^3 + 2 \alpha^2), (2, \alpha^3 + \alpha^2 + 2).$
 
  Let $f(x)=x^{10}-x^6-x^2$ on $\F_{3^5}$ (the complexity of this computation is about $2^{32}$, and if we were to go up to $n=6$, it would be $2^{40}$ operations, so we stopped at $n=5$). 
 We list all values of $(c,b)$, for which the $c$-DDT entry  (for some $a$) equals the  $c$-boomerang uniformity of~$6$ (all   $c$-BCT entries of  $[1, 2, 3, 4, 5, 6]$ occur):
 \allowdisplaybreaks
{ \scriptsize 
 \begin{align*}
&(\alpha^4 + \alpha^3 + \alpha^2 + 2 \alpha, \alpha^4 + \alpha^2 + 2 \alpha + 1), (2 \alpha^2 + 2 \alpha, \alpha^4 + 2 \alpha^3 + \alpha^2 + 1),\\
& (\alpha^4 + \alpha^2 + 2, 2 \alpha^3 + 2 \alpha),  (2 \alpha^4 + \alpha^3 + \alpha^2 + 2 \alpha + 2, \alpha^3 + \alpha^2 + \alpha), \\
& (2 \alpha^3 + \alpha^2 + \alpha + 2, \alpha^4 + 2 \alpha^3 + 2 \alpha + 1), (2 \alpha + 1, 2 \alpha^4 + \alpha^2 + 2 \alpha + 2), \\
& (\alpha^3 + \alpha^2 + \alpha, \alpha^4 + 2 \alpha^3 + 1), (\alpha^4 + \alpha^3 + 2 \alpha^2 + 2 \alpha + 2, \alpha^4 + 2 \alpha^3 + 2 \alpha + 1), \\
& (2 \alpha^4 + \alpha^3 + \alpha^2 + 2, 2 \alpha^4 + \alpha^3 + \alpha^2 + 2 \alpha + 2), (\alpha^4 + 2 \alpha^2, \alpha^4 + \alpha^3 + 2 \alpha^2 + \alpha + 1),\\
&  (2 \alpha^4 + \alpha^3 + 2 \alpha^2 + 2 \alpha + 1, 2 \alpha^4 + \alpha^3 + 2 \alpha^2 + \alpha + 2), (2 \alpha^2 + 2, \alpha^4 + 2 \alpha^2 + 2 \alpha + 1), \\
& (\alpha^4 + \alpha^3 + 1, 2 \alpha^4 + \alpha^3 + \alpha^2 + 2), (\alpha^3 + 2 \alpha, \alpha^4 + 2 \alpha^2 + 2 \alpha + 1), \\
& (\alpha^4 + 2 \alpha + 2, \alpha^4 + \alpha^3 + 2 \alpha^2 + \alpha + 1), (\alpha^4 + \alpha^3 + \alpha^2 + \alpha, 2 \alpha^4 + \alpha^2 + 2 \alpha + 2), \\
& (2 \alpha^3 + 1, 2 \alpha^4 + \alpha^3 + 2 \alpha + 2), (2 \alpha^4 + 2 \alpha^3 + \alpha + 2, \alpha^4 + \alpha + 1),\\
&  (2 \alpha^3 + \alpha + 2, 2 \alpha^4 + \alpha^3 + \alpha^2 + 2), (\alpha^4 + \alpha^3 + 2 \alpha^2 + 1, \alpha^4 + \alpha^2 + 2 \alpha + 1),\\
&  (2 \alpha^4 + 2 \alpha^3 + \alpha^2 + 1, \alpha^3 + \alpha^2 + \alpha), (2 \alpha^3 + 2 \alpha^2 + 2, 2 \alpha^4 + \alpha^3 + 2 \alpha + 2),\\
&  (\alpha^4 + 2 \alpha^3 + 1, \alpha^4 + \alpha^3 + 1), (\alpha^3 + \alpha^2 + 2 \alpha + 2, \alpha^4 + \alpha^3 + 1),\\
& (\alpha^3 + 2 \alpha^2 + \alpha + 2, \alpha^4 + 2 \alpha^3 + \alpha^2 + 1), (\alpha^4 + \alpha^3 + \alpha, 2 \alpha^3 + 2 \alpha), \\
& (2 \alpha^2 + \alpha + 2, \alpha^4 + \alpha + 1), (2 \alpha^3 + 2 \alpha^2 + \alpha, 2 \alpha^4 + \alpha^3 + 2 \alpha^2 + \alpha + 2),\\
& (\alpha^2 + \alpha + 2, 2 \alpha^4 + \alpha^3 + \alpha^2 + 2 \alpha + 2), (\alpha^4 + \alpha^3 + 2 \alpha, \alpha^4 + 2 \alpha^3 + 1).
 \end{align*}
 }

   Let $f(x)=x^{10}+x^6-x^2$ on $\F_{3^2}$. We list all values of $(c,b)$, for which the $c$-DDT entry  (for some $a$) equals the  $c$-boomerang uniformity of~$2$ (all  $c$-BCT entries of $[1, 2]$ occur):
$
 (2, 2 \alpha + 1), (2 \alpha + 1, \alpha), (2 \alpha, 2 \alpha + 1), (\alpha, \alpha), (\alpha + 2, 2 \alpha + 1), 
 (2 \alpha + 1, 2 \alpha + 1), (2, 1), (2 \alpha, \alpha), (2, \alpha), (\alpha, 2 \alpha + 1), (\alpha + 2, \alpha). 
 $
 
   Let $f(x)=x^{10}+x^6-x^2$ on $\F_{3^3}$. We list all values of $(c,b)$, for which the $c$-DDT entry  (for some $a$) equals the  $c$-boomerang uniformity of~$3$ (all  $c$-BCT entries of $[1, 2, 3]$ occur):
   \allowdisplaybreaks
{  \scriptsize 
   \begin{align*}
& (\alpha^2 + \alpha + 2, \alpha^2 + 2 \alpha + 2), (2 \alpha^2 + 2 \alpha, \alpha^2 + 2 \alpha + 2), (\alpha^2 + \alpha + 2, \alpha + 1),  (\alpha^2 + 2, 2),(\alpha^2 + 1, \alpha),\\
& (2 \alpha^2, \alpha + 1),  (2 \alpha^2 + \alpha + 2, \alpha), (\alpha^2 + 2 \alpha + 2, \alpha),  (\alpha, \alpha^2 + \alpha + 2),  (2 \alpha + 1, 2 \alpha^2 + 2), (\alpha^2 + \alpha, 2),\\
& (\alpha^2 + \alpha, 2 \alpha^2 + 2),(2 \alpha + 1, 2), (2 \alpha, 2),
 (\alpha^2 + \alpha + 2, \alpha + 2), (\alpha^2 + \alpha + 1, 2 \alpha^2 + 2 \alpha + 1), \\
 &  (2 \alpha^2 + 2 \alpha + 2, \alpha),  (\alpha^2 + 2, 2 \alpha^2 + \alpha + 1), (2 \alpha^2 + 1, \alpha^2 + \alpha + 2), (\alpha^2 + 2 \alpha + 1, 2 \alpha^2 + \alpha + 1), \\
& (2 \alpha, 2 \alpha^2 + \alpha + 1), (2 \alpha + 2, 2 \alpha^2 + 2 \alpha + 1), (\alpha + 1, \alpha^2 + 1),  (2 \alpha^2, \alpha + 2), (2 \alpha^2 + 2, 2 \alpha^2 + \alpha + 1), \\
&(\alpha^2, 2 \alpha^2 + 2), (\alpha^2 + 2 \alpha, 2 \alpha^2 + 2 \alpha + 1),  (\alpha^2 + 2 \alpha + 2, \alpha^2 + 1), (\alpha + 2, \alpha^2 + 2 \alpha + 2), (2 \alpha + 2, 2),\\
& (2 \alpha^2, \alpha^2 + 2 \alpha + 2),  (2 \alpha^2 + \alpha + 2, \alpha^2 + \alpha + 2), (2 \alpha^2 + 2 \alpha + 2, \alpha + 1), (\alpha^2 + 2 \alpha + 2, \alpha + 1), \\
& (2 \alpha^2 + \alpha + 1, 2 \alpha^2 + 2 \alpha + 1), (2 \alpha^2 + \alpha + 2, \alpha + 2), (\alpha^2 + 1, \alpha + 2), (\alpha^2 + 2 \alpha, 2),\\
&  (\alpha^2 + 1, \alpha^2 + \alpha + 2), (2 \alpha^2 + \alpha, \alpha^2 + 1), (2 \alpha^2 + 2 \alpha + 1, 2 \alpha^2 + 2), (2 \alpha^2 + 2 \alpha + 2, \alpha^2 + 1).
 \end{align*}
 }

   Let $f(x)=x^{10}+x^6-x^2$ on $\F_{3^4}$. We list all values of $(c,b)$, for which the $c$-DDT entry  (for some $a$) equals the  $c$-boomerang uniformity of~$7$ (all   $c$-BCT entries of $[1, 2, 3, 4, 5, 6, 7]$ occur):
$
 (\alpha^2 + 2 \alpha + 1, 2 \alpha^3 + 2 \alpha^2 + 1), (\alpha^2 + \alpha + 2, \alpha^3 + \alpha^2), 
 (\alpha^3 + 2 \alpha + 1, \alpha^3 + \alpha^2), (2 \alpha^3 + \alpha^2 + \alpha, 2 \alpha^3 + 2.
 $
 
 Let $f(x)=x^{10}+x^6-x^2$ on $\F_{3^5}$. 
 There are $160$ pairs $(c,b)$, for which the $c$-DDT entry  (for some $a$) equals the  $c$-boomerang uniformity of~$5$ (all   $c$-BCT entries of $[1,2,3,4,5]$ occur). We shall list here only the $140$ different values of $c$:
 \allowdisplaybreaks
{\tiny  \begin{align*}
&  2 \alpha + \alpha^3 + \alpha^4, 2 + 2 \alpha^2 + \alpha^3, 1 + 2 \alpha^3, 1 + \alpha + \alpha^2, 
 2 + \alpha^2 + \alpha^4, 1 + \alpha + 2 \alpha^3, \alpha + \alpha^2, 2 \alpha + \alpha^2, 1 + 2 \alpha, 1 + 2 \alpha + \alpha^2,   \\
 & 1 + 2 \alpha + \alpha^2 + 2 \alpha^3, \alpha^2, 1 + \alpha + \alpha^3, 
 \alpha + 2 \alpha^2 + 2 \alpha^3 + 2 \alpha^4, 2 + 2 \alpha + \alpha^2 + \alpha^4, 
 1 + 2 \alpha + 2 \alpha^2 + 2 \alpha^4, 1 + 2 \alpha + \alpha^3 + \alpha^4,\\
& 2 + \alpha + \alpha^2 + 2 \alpha^3 + 2 \alpha^4, 2 + 2 \alpha, \alpha^2 + \alpha^3,  2 + 2 \alpha + \alpha^2 + 2 \alpha^3 + \alpha^4, \alpha^4, \alpha + 2 \alpha^2 + 2 \alpha^3 + \alpha^4,  1 + \alpha + \alpha^3 + \alpha^4, \alpha^3 + 2 \alpha^4, \\
& 1 + \alpha^3,  2 + \alpha + 2 \alpha^2 + 2 \alpha^3 + 2 \alpha^4, 2 \alpha + 2 \alpha^2 + \alpha^3 + \alpha^4, 
 2 \alpha + 2 \alpha^2 + \alpha^3, 1 + 2 \alpha + 2 \alpha^2 + 2 \alpha^3, 2 \alpha + 2 \alpha^2 + \alpha^4,  2 \alpha + \alpha^2 + \alpha^3, \\
 & 1 + \alpha^2 + 2 \alpha^4, 1 + \alpha^2 + 2 \alpha^3 + 2 \alpha^4, 1 + \alpha + \alpha^2 + \alpha^4, \alpha^2 + \alpha^4, \alpha^2 + \alpha^3 + 2 \alpha^4, 2 + 2 \alpha^4,  2 + 2 \alpha + 2 \alpha^4, 2 \alpha + \alpha^2 + \alpha^4,\\
 &\alpha + \alpha^3 + 2 \alpha^4,  2 + 2 \alpha + 2 \alpha^2 + \alpha^4, 1 + \alpha, 2 + 2 \alpha^2 + \alpha^3 + 2 \alpha^4, 
 \alpha^2 + 2 \alpha^3 + \alpha^4, 2 + \alpha^2 + \alpha^3, \alpha + \alpha^2 + 2 \alpha^4, 
 \alpha^2 + 2 \alpha^3,   \\
 & \alpha + 2 \alpha^4, 1 + \alpha^3 + \alpha^4, 2 + 2 \alpha + 2 \alpha^3 + 2 \alpha^4, 1 + \alpha^2, 2 \alpha + \alpha^2 + 2 \alpha^3 + 2 \alpha^4, 
 2 + 2 \alpha + \alpha^4, 2 + \alpha + \alpha^2 + 2 \alpha^3 + \alpha^4, \\
 &1 + \alpha + \alpha^2 + \alpha^3 + 2 \alpha^4, \alpha + 2 \alpha^2 + \alpha^3 + 2 \alpha^4, 2 + \alpha + \alpha^3, 1 + \alpha + \alpha^3 + 2 \alpha^4, \alpha^3, 
 2 + \alpha + 2 \alpha^2 + 2 \alpha^4, 2 + \alpha^2 + \alpha^3 + 2 \alpha^4,\\
 & 1 + 2 \alpha + \alpha^2 + \alpha^3,  2 \alpha + \alpha^2 + 2 \alpha^3, 2 \alpha + \alpha^4, 1 + \alpha + 2 \alpha^2 + \alpha^3 + 2 \alpha^4, 
 2 + 2 \alpha^2 + 2 \alpha^3 + \alpha^4, 1 + 2 \alpha + \alpha^2 + 2 \alpha^3 + \alpha^4, \\
& 1 + 2 \alpha^3 + 2 \alpha^4, 2 + 2 \alpha^2 + 2 \alpha^3, 1 + 2 \alpha^2 + \alpha^3 + 2 \alpha^4, 
 2 + \alpha^2 + 2 \alpha^4, 2 + 2 \alpha^3, 1 + \alpha + \alpha^2 + 2 \alpha^3 + \alpha^4, 
 \alpha + \alpha^2 + \alpha^3 + \alpha^4, \\
 & 1 + 2 \alpha + 2 \alpha^2 + 2 \alpha^3 + 2 \alpha^4, 
 2 + 2 \alpha^2 + 2 \alpha^3 + 2 \alpha^4, 2 + \alpha^2 + 2 \alpha^3 + 2 \alpha^4, 
 2 + \alpha + 2 \alpha^2 + \alpha^4, \alpha + \alpha^2 + \alpha^3, 2 \alpha + 2 \alpha^4,\\
 &  \alpha + \alpha^2 + 2 \alpha^3, 
 2 \alpha^2, 2 + \alpha + \alpha^2 + 2 \alpha^3, 1 + 2 \alpha + 2 \alpha^2 + 2 \alpha^3 + \alpha^4, 
 1 + \alpha + \alpha^2 + \alpha^3 + \alpha^4, 2 + \alpha^3 + \alpha^4, \alpha + 2 \alpha^2 + \alpha^3, \\
& 2 + \alpha^2 + 2 \alpha^3, \alpha + 2 \alpha^2, \alpha, 1 + 2 \alpha + \alpha^2 + \alpha^3 + \alpha^4, 
 \alpha + 2 \alpha^2 + 2 \alpha^4, 2 + 2 \alpha + 2 \alpha^2 + \alpha^3 + \alpha^4, \alpha + \alpha^3 + \alpha^4,  \\
 & 2 \alpha^2 + \alpha^4, 1 + \alpha + 2 \alpha^2 + 2 \alpha^4, 
 2 \alpha^2 + 2 \alpha^3, 2 \alpha + \alpha^2 + 2 \alpha^3 + \alpha^4, 1 + 2 \alpha + 2 \alpha^4, 
 2 \alpha^3 + 2 \alpha^4, 2 \alpha + \alpha^2 + \alpha^3 + 2 \alpha^4,\\
 & 1 + 2 \alpha^2 + 2 \alpha^3 + 2 \alpha^4, 
 2 + \alpha + 2 \alpha^3 + \alpha^4, \alpha + 2 \alpha^2 + \alpha^4, 2 + \alpha + \alpha^4, 
 2 \alpha^2 + 2 \alpha^3 + \alpha^4, 1 + 2 \alpha + 2 \alpha^3 + 2 \alpha^4, 2 + \alpha + 2 \alpha^3, \\
& 2 + \alpha + 2 \alpha^4, 2 \alpha^2 + \alpha^3 + 2 \alpha^4, \alpha + 2 \alpha^3, 2 + 2 \alpha + 2 \alpha^2 + \alpha^3,
  2 + \alpha + \alpha^2 + 2 \alpha^4, \alpha + \alpha^2 + \alpha^4, 2 + 2 \alpha^3 + \alpha^4, \alpha + \alpha^4, \\
& 1 + 2 \alpha^2 + \alpha^4, \alpha^2 + \alpha^3 + \alpha^4, 2 + 2 \alpha + 2 \alpha^2 + 2 \alpha^3, 
 \alpha + 2 \alpha^3 + \alpha^4, 2 + \alpha + 2 \alpha^2 + 2 \alpha^3 + \alpha^4, 1 + 2 \alpha^3 + \alpha^4, \\
& 2 + 2 \alpha + \alpha^2 + \alpha^3 + 2 \alpha^4, \alpha + \alpha^2 + 2 \alpha^3 + 2 \alpha^4, 
 2 + 2 \alpha + 2 \alpha^2 + \alpha^3 + 2 \alpha^4, 1 + 2 \alpha^4, 2 + \alpha + \alpha^2,  2 + \alpha^2 + 2 \alpha^3 + \alpha^4,\\
& 1 + \alpha + \alpha^2 + \alpha^3, 1 + \alpha + \alpha^2 + 2 \alpha^4, 2 + 2 \alpha^2 + 2 \alpha^4, 
 2 + 2 \alpha + \alpha^2 + \alpha^3, 1 + \alpha^3 + 2 \alpha^4, 2 + 2 \alpha + \alpha^3 + \alpha^4, \\
& \alpha^2 + 2 \alpha^3 + 2 \alpha^4, 2 \alpha^2 + 2 \alpha^4, 2 + 2 \alpha + 2 \alpha^2 + 2 \alpha^3 + 2 \alpha^4, 
 1 + 2 \alpha + \alpha^2 + 2 \alpha^3 + 2 \alpha^4, , 2 \alpha + 2 \alpha^2 + 2 \alpha^3 + \alpha^4.
 \end{align*}
 }
 
 \end{document}